\documentclass[10pt,twoside,web]{ieeecolor}
\usepackage{generic}
\usepackage{amsmath,amssymb,amsfonts,mathtools}
\usepackage[bb=boondox]{mathalfa}
\usepackage{graphicx,color,subcaption}

\newtheorem{secthm}{Theorem}[section]

\newtheorem{seclem}[secthm]{Lemma}

\newtheorem{secprop}[secthm]{Proposition}

\newtheorem{secdefn}[secthm]{Definition}

\newtheorem{secsasm}[secthm]{Standing Assumption}

\renewcommand{\theenumi}{(\roman{enumi})}

\newcommand{\0}{\mathbb{0}}
\newcommand{\I}{\mathbb{I}}

\newcommand{\bA} { {\mathbb A}}
\newcommand{\bR} { {\mathbb R}}

\def\red{\hfill $\lhd$}

\allowdisplaybreaks[4]

\def\BibTeX{{\rm B\kern-.05em{\sc i\kern-.025em b}\kern-.08em
    T\kern-.1667em\lower.7ex\hbox{E}\kern-.125emX}}
\begin{document}
\title{Contraction Analysis of Time-Delay Systems}
\author{Rintaro Watanabe and Yu Kawano
\thanks{This work was supported in part by JST FOREIS Program Grant Number JPMJFR222E and JSPS KAKENHI Grant Number JP24K00910. }
\thanks{R.~Watanabe  and Y.~Kawano are with the Graduate School of Advanced Science and Engineering, Hiroshima University, Higashi-Hiroshima 739-8527, Japan (email: m242091@hiroshima-u.ac.jp; ykawano@hiroshima-u.ac.jp).}
}
\maketitle
\begin{abstract}
In this paper, we investigate contraction analysis for nonlinear time-delay systems described by functional differential equations. We first extend the concept of Lyapunov--Krasovskii functionals within the differential framework. We then show that its existence is equivalent to that of an incremental Lyapunov--Krasovskii functional and guarantees uniform incremental {\color{black}exponential} stability. Next, we extend the concept of Lyapunov--Razumikhin functions within the differential framework, whose existence also ensures uniform incremental {\color{black}exponential} stability. As an application of our results, we formulate stabilizing feedback control design for nonlinear time-delay systems with single delays in terms of linear matrix inequalities.
\end{abstract}

\begin{keywords}
Nonlinear systems; time-delay systems; functional differential equations; incremental stability; contraction analysis
\end{keywords}

\section{Introduction}
Time delays are inherent in communication networks~\cite{LSF:19}, economic systems~\cite{Gan:09}, biological processes~\cite{Mon:03}, and various physical phenomena~\cite{Han:91}. These delays can significantly influence system behavior and often induce instability. Consequently, stability analysis of time-delay systems has been an active research topic over the past few decades, particularly through the development of Lyapunov--Krasovskii and Lyapunov--Razumikhin approaches; see, e.g.,~\cite{HL:13,PK:13,EA:20} {\color{black}for systems without inputs and~\cite{Fridman:14b,Karafyllis:06,Pepe:07,IPZ:08,PKJ:17} for systems with inputs.}

In the delay-free case, incremental stability~\cite{Angeli:02}, which characterizes the stability of the distance between any pair of trajectories, has attracted considerable attention. It can be analyzed directly using incremental Lyapunov functions~\cite{Angeli:02} and differential Lyapunov functions (also called Finsler–Lyapunov functions)~\cite{FS:14} developed within the contraction framework, e.g.,~\cite{FS:14,LS:98,Bullo:22}. For incremental exponential stability, the existence of these two types of functions has been shown to be equivalent~\cite{KB:24}. Compared with equilibrium-based stability analysis, incremental stability provides stronger convergence guarantees, as the asymptotic behavior of trajectories is independent of their initial conditions. Accordingly, contraction theory offers new frameworks for analysis and design, e.g.,~\cite{MS:17,GAA:22,KSS:25,DCG:25} and enables computationally tractable methods by adequately specifying the class of systems.

In this paper, our objective is to generalize contraction analysis to nonlinear time-delay systems described by functional differential equations. Such a research direction is partly explored in, e.g.,~\cite{WS:06,LR:20,NT:18,KS:25,KH:25} as detailed by literature review below.


\smallskip

\subsubsection*{Literature Review}
Contraction analysis for time-delay systems has primarily focused on specific classes of systems or particular delay structures~\cite{WS:06,LR:20,KS:25,KH:25}. In~\cite{WS:06}, a differential 
Lyapunov--Krasovskii functional has been employed to analyze incremental stability of continuous-time delay-free systems interconnected via time-delayed communications with multiple constant delays. In~\cite{KS:25}, differential Lyapunov--Krasovskii functionals and Lyapunov--Razumikhin functions for continuous-time cooperative systems have been tailored to achieve adaptive stabilization of uncertain networks with multiple time-varying delays. In~\cite{KH:25}, a differential-type Lyapunov--Krasovskii functional has been used for contraction analysis of discrete-time stochastic monotone systems with multiple constant delays.

As an alternative approach, in~\cite{LR:20}, contraction analysis has been carried out directly using solutions for continuous-time linear time-varying systems interconnected via time-delayed nonlinear communications with arbitrary delays. A similar approach has been adopted in~\cite{NT:18}, which provides an exponential stability condition for general continuous-time time-delay systems. On the other hand, in~\cite{CPR:13}, an incremental Lyapunov--Razumikhin function has been employed to derive an incremental stability condition for general continuous-time time-delay systems. Both~\cite{NT:18} and~\cite{CPR:13} combine incremental input-to-state stability of the delay-free system with a small-gain argument. Also, an incremental Lyapunov--Krasovskii functional has been used to derive a sufficient condition for incremental input-to-state stability~\cite{PPD:10}.

{\color{black}In summary, the existing differential Lyapunov approaches have been tailored to specific system structures with constant/time-varying delays only. The existing differential and incremental Lyapunov frameworks focus on asymptotic stability properties. In contrast to the delay-free case, the connection between differential and incremental Lyapunov frameworks has not been established.}


\smallskip

\subsubsection*{Contribution}
{\color{black}In this paper, we develop differential Lyapunov frameworks for studying incremental exponential stability of general time-delay systems. Unlike the aforementioned differential Lyapunov approaches~\cite{WS:06,KS:25,KH:25}, we do not restrict our attention to specific structures such as delay-free systems interconnected via time-delayed communications or cooperative/monotone systems, nor do we assume constant/time-varying delays. For Lyapunov--Krasovskii functional analysis, we first show that a necessary and sufficient condition for exponential stability at an equilibrium point~\cite{PK:13}, under a global Lipschitz assumption, can naturally be extended to incremental exponential stability via incremental Lyapunov--Krasovskii functional analysis, again yielding a necessary and sufficient condition under a global Lipschitz assumption. 

Then, as one of the main results, we establish the equivalence between the existence of incremental and differential Lyapunov--Krasovskii functionals. In particular, inspired by the delay-free case~\cite{KB:24}, we provide an explicit procedure to construct an incremental (resp.\ a differential) Lyapunov--Krasovskii functional from a differential (resp.\ an incremental) one. The main challenge compared with the delay-free case is that the analysis must be carried out on a functional space. In particular, even variational systems are defined through Fr\'echet derivatives, requiring mathematically delicate arguments. That such extensions are nontrivial is evidenced by the substantial body of literature devoted to generalizing Lyapunov analysis from delay-free to time-delay systems. On the other hand, by combining the Lyapunov--Razumikhin framework for exponential stability analysis at an equilibrium point~\cite{EA:20} with our approach bridging incremental and differential frameworks for time-delay systems, we derive a sufficient differential Lyapunov--Razumikhin condition for incremental exponential stability.}

Finally, we illustrate the proposed framework through stabilizing controller design for nonlinear time-delay systems with a single delay. Extending existing results for linear systems~\cite{Fridman:14b}, we formulate a stabilizing controller synthesis condition within the framework of linear matrix inequalities (LMIs). {\color{black}That is, by virtue of the differential framework, we provide a numerically tractable design method for nonlinear time-delay systems, complementing incremental analysis.

The contributions of this paper are summarized as follows:
\begin{itemize}
\item we develop incremental and differential, Lyapunov--Krasovskii and 
Lyapunov--Razumikhin conditions for incremental exponential stability of 
general nonlinear time-delay systems;

\item we establish the equivalence between the existence of incremental and 
differential Lyapunov--Krasovskii functionals;

\item we formulate an LMI-based stabilizing controller synthesis condition 
for nonlinear time-delay systems.
\end{itemize}
}


\smallskip

\subsubsection*{Organization}
The rest of this paper is organized as follows. In Section~\ref{sec:Pre}, we describe the class of time-delay systems considered in this paper and review their variational systems. In Section~\ref{sec:ISA}, we present incremental and differential Lyapunov--Krasovskii functionals for time-delay systems and establish the equivalence of their existence, which guarantees uniform incremental exponential stability. We next propose a differential Lyapunov--Razumikhin functional, whose existence also ensures uniform incremental exponential stability. In Section~\ref{sec:SFD}, we apply the proposed differential Lyapunov--Krasovskii functionals to formulate stabilizing control design for nonlinear time-delay systems with single delays within the LMI framework, which is illustrated in Section~\ref{sec:Exa}. Finally, Section~\ref{sec:Con} concludes this paper.


\smallskip

\subsubsection*{Notation}
Let $(X,| \cdot |_X)$ and $(Y,| \cdot |_Y)$ be Banach spaces over the same field $K$. A linear operator $L:X \to Y$ is said to be \emph{bounded} if there exists some real number $M>0$ such that $|Lx|_Y \le M|x|_X$ for all $x \in X$. The space of all bounded linear operators from $X$ to $Y$ is denoted by $B(X,Y)$, which is a Banach space over the field $K$ with respect to the induced operator norm $|L|_{B(X,Y)} := \sup_{|x|_X=1} |Lx|_Y$.
Let $U \subset X$ be an open subset. {\color{black}The function $f: U \to Y$ is said to be \emph{continuous} at $x \in U$ if for any real number $\varepsilon > 0$, there exists some real number $\delta(\varepsilon, x) > 0$ such that
    \begin{align*}
        |x-x' |_X < \delta(\varepsilon, x), \; \forall x' \in U \quad \implies \quad | f(x)-f(x') |_Y < \varepsilon.
    \end{align*}
Also, it is said to be \emph{globally Lipschitz} on~$U$ if there exists some real number $M>0$ such that~$| f(x)-f(x') |_Y \le M |x -x'|_X$ for all~$(x,x') \in U \times U$.}
    The function $f:U \to Y$ is said to be \emph{Fr\'echet differentiable} at $x \in U$ if there exists some $D f(x) \in B(X,Y)$ such that
    \begin{align*}
        \lim_{\substack{|h |_X \to 0 \\ h \in U}} \frac{|f(x+h) - f(x) - D f(x)h|_Y}{|h |_X} = 0.
    \end{align*}
    The operator $D f(x)$ is called the \emph{Fr\'echet derivative} of $f$ at $x$ (if this exists, this is unique). 
    The function $f: U \to Y$ is said to be continuous or Fr\'echet differentiable on $V \subset U$ if $f$ is continuous or Fr\'echet differentiable at each $x \in V$, respectively. The function $f:U \to Y$ is said to be \emph{continuously Fr\'echet differentiable} at $x \in U$ if 1) it is Fr\'echet differentiable on $U$, and 2) $X \ni x \mapsto D f(x) \in B(X,Y)$ is continuous at $x$. The function $f: U \to Y$ is said to be continuously Fr\'echet differentiable on $V \subset U$ if $f$ is continuously Fr\'echet differentiable at each $x \in V$. 
    
    For $V \subset X$, let $C(V,Y)$ and $C^p(V, Y)$, $p \ge 0$ denote the set of continuous functions $f:V \to Y$ on $V$ and the set of all functions $f: V \to Y$ that have the bounded continuous Fr\'echet derivatives up through order $p$ on $V$. They are Banach spaces over the field $K$ with the supremum norm $\sup_{x\in V}|f(x)|_Y$ and with respect to the supremum norm over all Fr\'echet derivatives up through order $p$, respectively. Note that $C^0(V,Y) \subsetneq C(V,Y)$ because of the boundedness requirement for $C^0(V,Y)$.

Let $\bR$ and $\bR_+$ denote the field of real numbers and set of non-negative real numbers, respectively. A vector norm on $\bR^n$ is denoted by $| \cdot |$ regardless of its dimension $n$. Then, $(\bR^n, |\cdot |)$ is a Banach space over the field $\bR$. Let $\Omega \subset \bR^n$ be a subset. Let $C([a,b],\Omega)$ denote the set of continuous functions from $[a,b] \subset \bR$ ($a<b$) to $\Omega$ with respect to the supremum norm $\|\phi\|_C=\sup_{\theta \in [a,b]}|\phi(\theta)|$. For $\Omega = \bR^n$, $(C([a,b],\bR^n),\|\cdot \|_C)$ is a Banach space over the field $\bR$. Given $0 < r \in \bR$, if $x \in C([a - r,b],\Omega)$, then for each $t \in [a, b]$, define $x_t \in C([- r,0],\Omega)$ by $x_t (\theta) := x (t+\theta)$, $\theta \in [-r, 0]$. 

Let $U \subset \bR$ be an open subset. The function $x:U \to Y$ is said to be \emph{right-hand differentiable} at $t \in U$ if there exists a unique $\dot x(t) \in B(\bR,Y)$ (over the field $\bR$) such that
    \begin{align*}
        \dot x(t) := \lim_{\substack{h \to 0^+ \\ h \in \bR}} \frac{x(t+h) - x(t)}{h}.
    \end{align*}
The $n$-dimensional vector whose components are all zero is denoted by $\0_n$. Similarly, the $n \times m$-matrix whose components are all $0$ is denoted by $\0_{n \times m}$. The $n\times n$ identity matrix is denoted by $\I_n$.



\section{Preliminaries}\label{sec:Pre}
{\color{black}In the delay-free case, contraction analysis, e.g.,~\cite{FS:14,LS:98,Bullo:22} relies on the variational system along a solution to the original system, whose existence for all time is typically assumed. In this section, we first recall the concept of a solution and state the corresponding existence assumption for time-delay systems. We then recall the variational system associated with time-delay systems.}

\subsection{\color{black}Notion of Solution and Existence Assumption}

Let $\Omega \subset \bR^n$ be a convex subset, and let $U \subset \bR^n$ be an open subset containing $\Omega$.
Consider a time-delay system, described by the following functional differential equation:
\begin{align}\label{eq:TDS}
    \dot x(t) = f(t,x_t)
\end{align}
with $f \in C^2(\bR \times C([-r,0],U),\bR^n)$. {\color{black}Given initial data} $(t_0,\phi) \in \bR \times  C([-r,0],U)$, $\chi (t_0,\phi)(t)$ is said to be a solution to the system \eqref{eq:TDS} at time $t \in \bR$ $(t \ge t_0)$ if
\begin{enumerate}
\item $\chi_{t_0} (t_0, \phi) = \phi$;
\item there exists $0 < a(t_0, \phi) \in \bR$ such that
\begin{enumerate}
\item $x(t) := \chi (t_0,\phi)(t)$, $t \in [t_0-r, t_0+a(t_0,\phi))$ satisfies $x \in C([t_0-r, t_0+a(t_0,\phi)), U)$;
\item $x(t)$ is right-hand differentiable and satisfies \eqref{eq:TDS} for any $t \in [t_0, t_0+ a(t_0, \phi))$. 
\end{enumerate}
\end{enumerate}
We also call $\chi (t_0,\phi)(t)$ a solution on $[t_0-r, t_0+a(t_0,\phi))$ when making the interval of existence explicit.
Note that $x \in C([t_0-r, t_0+a(t_0,\phi)), U)$ implies $x_t \in C([-r, 0], U)$ for any $t \in [t_0, a(t_0, \phi))$~\cite[Lemma 2.2.1]{HL:13}. 

{\color{black}Next, we discuss the uniqueness of the solution.}
From the assumption $f \in C^2(\bR \times C([-r,0],U),\bR^n)$, for each $(t_0, \phi) \in \bR \times  C([-r,0],U)$, there exists $0 < a(t_0, \phi) \in \bR$ such that a solution $\chi (t_0,\phi)(t)$ uniquely exists on $[t_0-r, t_0+a(t_0,\phi))$ \cite[Theorem 2.4.1]{HL:13}, and $\chi (t_0,\phi)(t)$ is continuous with respect to $(t_0, \phi, t)$ on $\bR \times C([-r,0],U) \times [t_0-r,t_0+a(t_0,\phi))$ \cite[Theorem 2.2.2]{HL:13}.

Throughout this paper, we assume that for each $(t_0, \phi) \in \bR \times  C([-r,0],\Omega)$, the solution can be extended to $t \to \infty$ in the following sense, which is not explicitly mentioned hereafter.
\begin{secsasm}\label{asm:sol}
    For each $(t_0, \phi) \in \bR \times  C([-r,0],\Omega)$, 
    \begin{enumerate}
        \item the system \eqref{eq:TDS} admits a solution $\chi (t_0,\phi)(t) \in \Omega$ on $[t_0-r, \infty)$;
        \item \label{iem2} there exists $0 \le c(\phi) \in \bR$ such that $|\chi (t_0,\phi)(t)| \le c(\phi)$ for all $t, t_0 \in \bR$ $(t \ge t_0)$. \red
    \end{enumerate}
\end{secsasm}

{\color{black}Note that Standing Assumption~\ref{asm:sol} and $f \in C^2(\bR \times C([-r,0],U),\bR^n)$ implies the uniqueness of the solution on $[t_0-r, \infty)$.}


\subsection{\color{black}Variational Systems}
{\color{black}We in this subsection introduce the variational system of~\eqref{eq:TDS} along its solution, which is defined by using the Fr\'echet derivative of $f(t, \phi)$ with respect to $\phi$, denoted by $D_\phi f(t, \phi)$.} Because of $f \in C^2(\bR \times C([-r,0],\Omega),\bR^n)$, the Fr\'echet derivative uniquely exists and is continuous on $\bR \times C([-r,0],\Omega)$ as a mapping $\bR \times C([-r,0],\Omega) \ni (t, \phi) \mapsto D_\phi f(t, \phi) \in B (C([-r,0],\Omega), \bR^n)$. Under Standing Assumption~\ref{asm:sol}, $D_{\chi_t} f(t, \chi_t(t_0,\phi)) \color{black}:= D_{\upsilon} f(t, \upsilon)|_{\upsilon = \chi_t(t_0,\phi)}$ is continuous with respect to $(t_0,\phi,t)$ on $\bR \times C([-r,0],\Omega) \times [t_0-r,\infty)$. Thus, the following variational system of the system~\eqref{eq:TDS} along its trajectory $x(t) = \chi (t_0,\phi)(t)$ is well defined:
\begin{align}\label{eq:vTDS}
    \dot {\delta x}(t) = D_{\chi_t} f(t,\chi_t(t_0,\phi)) \delta x_t.
\end{align}
For $\delta \phi \in C([-r,0],\bR^n)$, its solution is defined in a similar manner as that of~\eqref{eq:TDS}. From the linearity of the right-hand side of~\eqref{eq:vTDS} with respect to $\delta x_t$ and Standing Assumption~\ref{asm:sol}, for each $(t_0, \phi) \in \bR \times  C([-r,0],\Omega)$ and every $\delta \phi \in C([-r,0],\bR^n)$, a solution $\delta x(t)$ uniquely exists on $[t_0-r,\infty)$.

It is known that the solution to the variational system~\eqref{eq:vTDS} can be constructed from the solution to the system~\eqref{eq:TDS}. This is recalled for the sake of later analysis.

\begin{seclem}\cite[Theorem 2.4.1]{HL:13}\label{lem:sol}
Consider a system~\eqref{eq:TDS} and its solution $\chi (t_0,\phi)(t)$. Then, the Fr\'echet derivative of $\chi (t_0,\phi)(t)$ with respect to $\phi$, denoted by $D_\phi \chi (t_0,\phi)(t)$, belongs to $B(C([-r,0],\bR^n),\bR^n)$ and satisfies $D_\phi \chi (t_0,\phi)(t_0) = \I_n$ for any $t \in \bR$ $(t \ge t_0)$ and $(t_0, \phi) \in \bR \times  C([-r,0],\Omega)$. Moreover, the following $\delta x(t)$:
\begin{align*}
\delta x(t) = (D_\phi \chi (t_0,\phi)(t)) \delta \phi
\end{align*}
is the solution to~\eqref{eq:vTDS} on $[t_0-r, \infty)$ for any $t \in \bR$ $(t \ge t_0)$, $(t_0, \phi) \in \bR \times  C([-r,0],\Omega)$, and $\delta \phi \in C([-r,0],\bR^n)$.
\red
\end{seclem}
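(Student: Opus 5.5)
This lemma is the classical differentiability-with-respect-to-initial-data result for retarded functional differential equations. I would prove it in the standard way: write the solution in integral form, differentiate formally, and justify the formal derivative with a Gronwall-type estimate. Throughout I fix $(t_0,\phi)\in\bR\times C([-r,0],\Omega)$ and write $x(t)=\chi(t_0,\phi)(t)$. By Standing Assumption~\ref{asm:sol}, $x$ exists on $[t_0-r,\infty)$ and stays in $\Omega$. For $t\ge t_0$ it satisfies
\begin{align*}
x(t)=\phi(0)+\int_{t_0}^t f(s,x_s)\,ds ,
\end{align*}
and $x_{t_0}=\phi$. For $\delta\phi\in C([-r,0],\bR^n)$ I define the candidate derivative $y(t)=L(t)\delta\phi$ to be the unique solution of the linear variational system~\eqref{eq:vTDS} with $y_{t_0}=\delta\phi$. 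Existence and uniqueness of this solution were already noted above. Linearity of~\eqref{eq:vTDS} makes $\delta\phi\mapsto y(t)$ linear.

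\textbf{Boundedness of the candidate.} On a compact interval $[t_0,T]$, the operator norm $|D_\phi f(s,x_s)|$ is continuous in $s$, hence bounded by some $M$. Gronwall applied to $\|y_t\|_C$ gives $\|y_t\|_C\le e^{M(t-t_0)}\|\delta\phi\|_C$. So $L(t)\in B(C([-r,0],\bR^n),\bR^n)$.

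\textbf{Remainder estimate.} Let $x^h(t)=\chi(t_0,\phi+h)(t)$ for small $\|h\|_C$, restricted to a convex neighbourhood where the solution stays in $U$. Continuous dependence gives $\sup_{t\in[t_0,T]}\|x^h_t-x_t\|_C\to 0$ as $h\to0$. Set $z(t)=x^h(t)-x(t)-L(t)h$. Then
\begin{align*}
z(t)=\int_{t_0}^t \Big(D_\phi f(s,x_s)z_s + R(s,h)\Big)ds,
\end{align*}
where
\begin{align*}
R(s,h)=\int_0^1\big(D_\phi f(s,x_s+\sigma(x^h_s-x_s))-D_\phi f(s,x_s)\big)d\sigma\,(x^h_s-x_s),
\end{align*}
by the mean value theorem in integral form. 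This step uses convexity and $f\in C^2$, so that $D_\phi f$ is continuous and, on the compact set $\{x_s:s\in[t_0,T]\}$, locally uniformly continuous. A preliminary Gronwall argument gives $\|x^h_s-x_s\|_C\le e^{M'(s-t_0)}\|h\|_C$. It follows that $|R(s,h)|\le \varepsilon(h)\|h\|_C$, with $\varepsilon(h)\to0$ uniformly on $[t_0,T]$. Since $z_{t_0}=0$, Gronwall yields $\|z_t\|_C\le \varepsilon(h)\|h\|_C(t-t_0)e^{M(t-t_0)}=o(\|h\|_C)$. This is precisely Fréchet differentiability of $\phi\mapsto\chi(t_0,\phi)(t)$, with $D_\phi\chi(t_0,\phi)(t)=L(t)$. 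Hence $\delta x(t)=D_\phi\chi(t_0,\phi)(t)\delta\phi$ solves~\eqref{eq:vTDS}, and $T$ was arbitrary.

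At $t=t_0$ we have $\chi(t_0,\phi)(t_0)=\phi(0)$, so the derivative is evaluation at $0$. The statement's ``$\I_n$'' is to be read in this sense, i.e.\ $\delta\phi\mapsto\delta\phi(0)$.

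\textbf{Main obstacle.} The delicate step is the uniform smallness of $R$. It requires uniform continuity of $D_\phi f$ along a compact family of segments in an infinite-dimensional space. Compactness of $\{x_s\}$ follows from continuity of $s\mapsto x_s$, and a standard compactness argument then upgrades pointwise continuity of $D_\phi f$ to uniform continuity on a neighbourhood. I would also need to keep the perturbed solutions inside $U$ on $[t_0,T]$, which follows from continuous dependence because $\Omega\subset U$ and $U$ is open.
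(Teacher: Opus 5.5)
The paper gives no proof of its own here. The lemma is imported from Hale--Verduyn Lunel, Theorem~2.4.1, so your proposal replaces a citation rather than an argument in the text.

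Your argument is correct and is the classical one:
\begin{itemize}
\item the integral form of the equation;
\item the solution of~\eqref{eq:vTDS} with $y_{t_0}=\delta\phi$ as the candidate derivative;
\item the integral mean value formula for the remainder;
\item smallness of $D_\phi f(s,\psi)-D_\phi f(s,x_s)$, uniformly over $s\in[t_0,T]$ and $\|\psi-x_s\|_C<\delta$, obtained from compactness of $\{(s,x_s)\}$ and joint continuity of $D_\phi f$;
\item Gronwall applied to $\|z_t\|_C$.
\end{itemize}

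Compared with the citation, your route makes the hypotheses visible. You use only continuity of $D_\phi f$ (not $C^2$) and item~(i) of Standing Assumption~\ref{asm:sol} (not the uniform bound~(ii)). Convexity of $\Omega$ plays no role. The segments between $x_s$ and $x^h_s$ lie in $C([-r,0],U)$ because they stay in a small sup-norm ball around the compact range of $x$, and the convexity of that ball is the only convexity you need.

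It also settles two points the statement leaves loose. First, the Fr\'echet derivative has to be taken on the open set $C([-r,0],U)$, since $C([-r,0],\Omega)$ need not be open; this is why you need continuous dependence to have $x^h$ on $[t_0,T]$. Second, ``$\I_n$'' can only mean the map $\delta\phi\mapsto\delta\phi(0)$.

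Your remainder bound is on $\|z_t\|_C$ rather than $|z(t)|$. It therefore also gives differentiability of the segment map $\phi\mapsto\chi_t(t_0,\phi)$, with derivative $(D_\phi\chi(t_0,\phi))_t$. This is exactly the fact the paper says it uses implicitly right after the lemma.

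One wording fix in your last paragraph. Compactness does not give uniform continuity of $D_\phi f$ on a neighbourhood, which can fail in infinite dimensions. It gives only uniform continuity at the compact set, as described above, and that is all your estimate of $R(s,h)$ uses.
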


Hereafter, we implicitly use the fact that $(D_\phi \chi (t_0,\phi))_t$ is equivalent to the Fr\'echet derivative of $\chi_t (t_0,\phi)$ with respect to $\phi$, denoted by $D_\phi \chi_t (t_0,\phi)$.



\section{Incremental Stability Analysis}\label{sec:ISA}

\subsection{Incremental Stability \color{black}and Incremental Lyapunov Analysis}
In this paper, we study incremental exponential stability of the time-delay system~\eqref{eq:TDS}. {\color{black}As a preliminary step toward contraction analysis, we extend in this subsection the notions of stability at an equilibrium point and the associated Lyapunov framework to the incremental setting.}

Following the delay-free case~\cite{Angeli:02,FS:14}, we introduce incremental exponential stability through an auxiliary system composed of system~\eqref{eq:TDS} and its copy, thereby characterizing the stability between pairs of trajectories:
\begin{align}\label{eq:aTDS}
    \left\{
    \begin{alignedat}{2}
        \dot x &= f(t,x_t), \quad x_{t_0} =\phi\\
        \dot x' &= f(t,x'_t), \quad x'_{t_0} =\phi',
    \end{alignedat}
    \right.
\end{align}
where $(\phi, \phi') \in C([-r,0],\Omega) \times C([-r,0],\Omega)$.

Combining {\color{black}exponential} stability of an equilibrium point for time-delay systems {\color{black}\cite[Definition 2.2]{PK:13}} and incremental stability of delay-free systems~\cite{Angeli:02,FS:14}, we introduce the following incremental {\color{black}exponential} stability for time-delay systems.

\begin{secdefn}
Consider the system~\eqref{eq:TDS} and its auxiliary system~\eqref{eq:aTDS}.
The system~\eqref{eq:TDS} is said to be {\color{black}\emph{uniformly incrementally exponentially stable} on $\Omega$ if there exist~$0 < \lambda \in \bR$ and~$1 \le k \in \bR$ such that
    \begin{align*}
    & |\chi(t_0,\phi)(t) - \chi(t_0,\phi')(t)| 
    \le k e^{-\lambda (t-t_0)} \|\phi-\phi'\|_C, 
    \; \forall t \ge t_0
    \end{align*}
    for any $t_0 \in \bR$ and $(\phi, \phi') \in C([-r,0],\Omega) \times C([-r,0],\Omega)$.
    \red
    }
\end{secdefn}

{\color{black}
In the delay-free case, there are two main approaches to studying incremental stability: 1) incremental approach, which directly employs incremental Lyapunov functions, i.e., Lyapunov functions of the auxiliary system~\eqref{eq:aTDS}; and 2) differential approach, which employs Lyapunov functions of the variational system~\eqref{eq:vTDS}. The latter often leads to numerically tractable conditions, as partly illustrated in Section~\ref{sec:SFD} below, although deriving the stability conditions is technically involved. 

Incremental Lyapunov analysis can often be developed by extending standard Lyapunov analysis at an equilibrium point. We illustrate this at the end of the subsection by extending exponential stability analysis at an equilibrium point. For time-delay systems, there are two classical Lyapunov frameworks: a) Lyapunov--Krasovskii framework, and b) Lyapunov--Razumikhin framework~\cite{PK:13,EA:20}. We present both generalizations below.

First, we begin with Lyapunov--Krasovskii analysis. As in standard analysis, we consider the upper-right Dini derivative~\cite{HL:13} of a functional~$V_I:\bR \times C([-r,0],\Omega) \times C([-r,0],\Omega) \to \bR$ along the solution to the auxiliary system~\eqref{eq:aTDS}:
\begin{align*}
    &D^+ V_I(t, \phi, \phi') :=\\
    &\limsup_{h \to 0^+} \frac{V_I(t+h, \chi_{t+h}(t,\phi), \chi_{t+h}(t,\phi')) - V_I(t, \phi, \phi')}{h}.
\end{align*}

Based on the global exponential stability analysis of an equilibrium point in~\cite[Theorem 2.5]{PK:13}, we obtain the following necessary and sufficient condition for incremental exponential stability under a global Lipschitz assumption.

\begin{secprop}\label{prop:iLK}
    A system~\eqref{eq:TDS} is uniformly incrementally exponentially stable on $\Omega$ if there exist $k_1, k_2, k_3 > 0$, $1 \le p \in \bR$, and continuous functional~$V_I:\bR \times C([-r,0],\Omega) \times C([-r,0],\Omega) \to \bR$ such that
    \begin{subequations}\label{eq2:iLK}
    \begin{align}
       & k_1 |\phi(0) - \phi'(0)|^p \le V_I(t,\phi,\phi') \le k_2 \|\phi - \phi'\|_C^p
       \label{eq2:iLK-a}\\
        &D^+ V_I(t, \phi, \phi') \le - k_3 |\phi (0) - \phi' (0)|^p
        \label{eq2:iLK-b}
    \end{align}
    \end{subequations}
    for any $t \in \bR$ ($t \ge t_0$) and $(\phi, \phi') \in C([-r,0],\Omega) \times C([-r,0],\Omega)$. The converse is also true if~$f(t, x_t)$ is globally Lipschitz with respect to~$x_t$ on~$\bR \times \Omega$. 
\end{secprop}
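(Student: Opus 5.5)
For sufficiency I will follow the scheme of \cite[Theorem 2.5]{PK:13}, applied to the pair of solutions $x(t)=\chi(t_0,\phi)(t)$ and $x'(t)=\chi(t_0,\phi')(t)$ of the auxiliary system~\eqref{eq:aTDS}. Set $v(t):=V_I(t,x_t,x'_t)$ and $e(t):=|x(t)-x'(t)|$. By uniqueness of solutions, $\chi_{t+h}(t,x_t)=x_{t+h}$, so \eqref{eq2:iLK-b} evaluated at $(t,x_t,x'_t)$ gives $D^+v(t)\le -k_3 e(t)^p$. The comparison lemma for Dini derivatives (with $v$ continuous in $t$) then yields
\begin{align*}
v(t)+k_3\int_{t_0}^t e(s)^p\,ds\le v(t_0)\le k_2\|\phi-\phi'\|_C^p .
\end{align*}
Because $V_I$ is only bounded above by $\|\cdot\|_C^p$ and not by $|\cdot(0)|^p$, the inequality $D^+v\le -c\,v$ is not available directly. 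I will therefore use a weighted version: for $0<\mu$ small, compute $D^+(e^{\mu(t-t_0)}v(t))\le e^{\mu(t-t_0)}\bigl(\mu v(t)-k_3e(t)^p\bigr)$ and bound $v(t)\le k_2\sup_{s\in[t-r,t]}e(s)^p$.

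To close this estimate I will proceed in three steps. First, the integral inequality gives $\int_{t_0}^{\infty}e^p\,ds\le (k_2/k_3)\|\phi-\phi'\|_C^p$ and $e(t)^p\le (k_2/k_1)\|\phi-\phi'\|_C^p$ for all $t\ge t_0$. Second, apply the same estimates starting from an arbitrary time $s\ge t_0$ with initial functions $x_s,x'_s$, which is legitimate since the hypotheses hold for all $t$. This shows that $\|x_t-x'_t\|_C$ is uniformly bounded by a constant multiple of $\|x_s-x'_s\|_C$, and that its $p$-th power integrated over $[s,\infty)$ is at most $C\|x_s-x'_s\|_C^p$. Third, integrate in $s$ in the standard Datko/Pazy fashion: $(t-s)\,\|x_t-x'_t\|_C^p\le \int_s^t \tilde C\|x_\tau-x'_\tau\|_C^p\,d\tau \le C'\|x_s-x'_s\|_C^p$. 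Choosing a fixed $T$ with $C'/T\le 1/2$ gives $\|x_{s+T}-x'_{s+T}\|_C^p\le \tfrac12\|x_s-x'_s\|_C^p$, and iterating yields the exponential bound with $\lambda=\ln 2/(pT)$ and some $k\ge 1$. Integrating $\|x_\tau-x'_\tau\|_C^p$ over windows of length $r$ costs only a factor $(1+r)$ relative to $\int e^p$.

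For the converse I assume uniform incremental exponential stability with constants $k,\lambda$ and global Lipschitz constant $L$, and I define
\begin{align*}
V_I(t,\phi,\phi'):=\int_t^{t+T}|\chi(t,\phi)(s)-\chi(t,\phi')(s)|^p\,ds
+\int_{-r}^{0}|\phi(\theta)-\phi'(\theta)|^p\,d\theta
\end{align*}
(or a variant obtained by integrating over $[t,\infty)$). The upper bound $V_I\le \bigl(k^p/(p\lambda)+r\bigr)\|\phi-\phi'\|_C^p$ follows directly from the stability estimate. The lower bound uses the Lipschitz property: since $|\dot x-\dot x'|\le L\|x_s-x'_s\|_C$, we have $|x(s)-x'(s)|\ge |\phi(0)-\phi'(0)|-L(s-t)\sup\|x_\tau-x'_\tau\|_C$. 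Restricting the integral to a short interval near $t$ gives a bound of order $c|\phi(0)-\phi'(0)|^p$, provided the supremum is controlled. This control is the delicate step, because $\|\phi-\phi'\|_C$ can be much larger than $|\phi(0)-\phi'(0)|$. If the first integral alone fails to give the lower bound, I will instead use the Lipschitz-based growth bound backward in time to obtain a functional with explicit dependence, as in \cite{PK:13}. For the derivative condition, the flow property gives $D^+V_I=-|\phi(0)-\phi'(0)|^p$ for the tail-integral version, i.e., when the upper limit is $\infty$. The $\theta$-integral term contributes $|\phi(0)-\phi'(0)|^p-|\phi(-r)-\phi'(-r)|^p$, so I will weight it to ensure the total derivative stays $\le -k_3|\phi(0)-\phi'(0)|^p$.

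I expect the main obstacle to be the lower bound $k_1|\phi(0)-\phi'(0)|^p\le V_I$ in the converse. My first attempt will be to bound $|x(s)-x'(s)|\ge|\phi(0)-\phi'(0)|-L\int_t^s\|x_\tau-x'_\tau\|_C\,d\tau$ and to handle the $\|\phi-\phi'\|_C$ contribution by absorbing it through a small integration horizon, following the proof of \cite[Theorem 2.5]{PK:13}.
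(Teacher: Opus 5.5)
Your proposal has two genuine gaps, one in each direction. For context, the paper gives no details of its own: it invokes \cite[Theorem 2.5]{PK:13} and uses item~\ref{iem2} of Standing Assumption~\ref{asm:sol} for uniformity in time.

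\textbf{Sufficiency.} The gap is in the second half of your Step~2, and Step~3 rests on it. The bound $\int_s^\infty\|x_\tau-x'_\tau\|_C^p\,d\tau\le C\|x_s-x'_s\|_C^p$ does not follow from $\int e^p<\infty$. The quantity $\|x_\tau-x'_\tau\|_C^p=\sup_{\theta\in[-r,0]}e(\tau+\theta)^p$ is a supremum over a window, and a supremum is not controlled by an integral: a narrow spike of $e$ has large height but small $L^p$ mass. So the ``factor $(1+r)$'' claim is false. What \eqref{eq2:iLK} gives you directly is a uniform bound and a uniform $L^p$ bound on $e$, and you never use any regularity of $f$.

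The missing ingredient is a bound on the rate of change of $x-x'$. It is available here because $f$ has a bounded Fr\'echet derivative on the convex set $C([-r,0],\Omega)$ (or directly from a global Lipschitz hypothesis). Set $N:=\|x_s-x'_s\|_C$ and $K:=\max\{1,(k_2/k_1)^{1/p}\}$. Then $|\dot x(\tau)-\dot x'(\tau)|\le L\|x_\tau-x'_\tau\|_C\le LKN$ for $\tau\ge s$.

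With this bound, replace the Datko step by a halving argument:
\begin{itemize}
\item Suppose $\|x_\sigma-x'_\sigma\|_C>N/(2K)$ for all $\sigma\in[s,s+T]$.
\item Then each window $[\sigma-r,\sigma]$ contains a point with $e>N/(2K)$.
\item Hence $e>N/(4K)$ on an interval of length $1/(4LK^2)$, which is independent of $N$.
\item Counting disjoint windows gives a lower bound on $\int_s^{s+T}e^p$ that grows linearly in $T$ and is proportional to $N^p$.
\item For $T$ large (uniformly in $s$ and $N$) this contradicts $\int_s^\infty e^p\le(k_2/k_3)N^p$.
\end{itemize}
So some $\sigma\le s+T$ has $\|x_\sigma-x'_\sigma\|_C\le N/(2K)$. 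Your (correct) first half of Step~2 then gives $\|x_{s+T}-x'_{s+T}\|_C\le N/2$, and iterating yields the exponential estimate.

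\textbf{Converse.} You correctly flag the lower bound $k_1|\phi(0)-\phi'(0)|^p\le V_I$ as the delicate step, but you leave it unresolved, and the remedies you propose do not close it. Write $\eta:=|\phi(0)-\phi'(0)|$ and $H:=\|\phi-\phi'\|_C$.
\begin{itemize}
\item The Lipschitz estimate only keeps $|x(s)-x'(s)|\ge\eta/2$ on an interval of length $\eta/(2LkH)$. So the integral term is bounded below only by a multiple of $\eta^{p+1}/H$.
\item Since $\eta/H$ can be arbitrarily small, no fixed short horizon absorbs $H$.
\item The term $\int_{-r}^0|\phi-\phi'|^p\,d\theta$ does not control $\eta^p$ on its own, and you give no argument that it repairs the sum.
\item The finite-horizon variant also fails the decrease condition, because $|x(t+T)-x'(t+T)|^p$ is bounded only by $k^pe^{-p\lambda T}H^p$, not by a multiple of $\eta^p$.
\end{itemize}

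A supremum-type functional avoids the difficulty entirely. For $0<\mu<p\lambda$, define $V_I(t,\phi,\phi'):=\sup_{\tau\ge t}e^{\mu(\tau-t)}|\chi(t,\phi)(\tau)-\chi(t,\phi')(\tau)|^p$.
\begin{itemize}
\item It satisfies $\eta^p\le V_I\le k^pH^p$; take $\tau=t$ for the lower bound and use the stability estimate for the upper bound.
\item By uniqueness of solutions, with $x=\chi(t,\phi)$ and $x'=\chi(t,\phi')$, you get $V_I(t+h,x_{t+h},x'_{t+h})\le e^{-\mu h}V_I(t,\phi,\phi')$. Hence $D^+V_I\le-\mu V_I\le-\mu\eta^p$.
\item Continuity follows from continuous dependence on initial data over compact horizons (made uniform by Gronwall under the Lipschitz assumption), together with the uniform exponential tail.
\end{itemize}
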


\begin{proof}
    In item~\ref{iem2} of Standing Assumption~\ref{asm:sol}, we assume the uniform boundedness of the solution $\chi(t_0,\phi)(t)$ with respect to $(t_0,t)$. Under this assumption, the proof is similar to that of~\cite[Theorem 2.5]{PK:13} even though the system is time-varying, since we consider the uniform stability property with respect to time.
\end{proof}

Next, we extend the Lyapunov--Razumikhin condition~\cite[Theorem 2]{EA:20} as follows. In contrast to the Lyapunov--Krasovskii framework, the converse is not clear even for standard stability at an equilibrium point.

\begin{secprop}\label{prop:iLR}
    A system~\eqref{eq:TDS} is uniformly incrementally {\color{black}exponentially} stable on $\Omega$ if there exist $k_1, k_2, k_3 > 0$, $1 \le p \in \bR$, ~$\color{black}1 < \rho \in \bR$, and continuous function~$\bar V_I: \bR \times \Omega \times \Omega \to \bR$ such that
    \begin{subequations}\label{cond:iLR}
    \begin{align}
        k_1 |x-x'|^p \le \bar V_I(t,x,x') \le k_2 |x-x'|^p
        \label{cond:iLR-a}
    \end{align}
    holds for any $t \in \bR$ and $(x, x') \in \Omega \times \Omega$, and   \begin{align}
        &\bar V_I (t+\theta, \phi(\theta), \phi'(\theta)) 
        \le \rho \bar V_I(t, \phi (0), \phi' (0)) \nonumber\\
        &\implies \quad D^+ \bar V_I(t, \phi (0), \phi'  (0)) \le  - k_3 |\phi (0) - \phi' (0)|^p \label{cond:iLR-b2} 
     \end{align}
    \end{subequations}
    holds for any $\theta \in [-r,0]$, $t \in \bR$, and $(\phi, \phi') \in C([-r,0],\Omega) \times C([-r,0],\Omega)$.
\end{secprop}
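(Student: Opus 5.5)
We follow the standard Razumikhin argument for exponential stability at an equilibrium, as in \cite[Theorem 2]{EA:20}, applied to the scalar function
\begin{align*}
v(t) := \bar V_I(t, x(t), x'(t)), \qquad x(t)=\chi(t_0,\phi)(t),\; x'(t)=\chi(t_0,\phi')(t),
\end{align*}
which is continuous in $t$ on $[t_0-r,\infty)$ by continuity of $\bar V_I$ and of solutions. By \eqref{cond:iLR-a}, $k_3|x(t)-x'(t)|^p \ge (k_3/k_2) v(t)$. Hence whenever the Razumikhin premise holds at time $t$, i.e., $v(t+\theta)\le \rho v(t)$ for all $\theta\in[-r,0]$ (applied to $x_t,x'_t$), \eqref{cond:iLR-b2} gives $D^+ v(t) \le -\mu v(t)$ with $\mu := k_3/k_2$.

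\textbf{Choice of rate.} Next we fix an exponential rate $\lambda_0 \in (0,\mu)$ small enough that $e^{\lambda_0 r} < \rho$; this is possible because $\rho>1$. We then claim that
\begin{align*}
w(t) := e^{\lambda_0 (t-t_0)} v(t) \le M := \max_{\theta\in[-r,0]} e^{\lambda_0\theta}\, v(t_0+\theta) \le k_2\|\phi-\phi'\|_C^p
\end{align*}
for all $t\ge t_0$.

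\textbf{Contradiction argument.} Suppose the claim fails. Using continuity of $w$, we select a first crossing: for $\varepsilon>0$ small, let $t^*$ be the infimum of times at which $w > M+\varepsilon$. Then $w(t^*)=M+\varepsilon$, $w\le M+\varepsilon$ on $[t_0-r,t^*]$, and there are points arbitrarily close to $t^*$ from the right with $w>M+\varepsilon$, so $D^+ w(t^*)\ge 0$. On the other hand, for $\theta\in[-r,0]$,
\begin{align*}
v(t^*+\theta) \le e^{-\lambda_0(t^*+\theta-t_0)}(M+\varepsilon) = e^{-\lambda_0\theta} v(t^*) \le e^{\lambda_0 r} v(t^*) \le \rho\, v(t^*),
\end{align*}
so the Razumikhin premise holds at $t^*$. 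Therefore $D^+ w(t^*) \le (\lambda_0-\mu) w(t^*) <0$, since $w(t^*)=M+\varepsilon>0$. This is a contradiction. Letting $\varepsilon\to0$ proves the claim.

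\textbf{Conclusion.} From the claim and \eqref{cond:iLR-a},
\begin{align*}
k_1|x(t)-x'(t)|^p \le k_2 e^{-\lambda_0(t-t_0)}\|\phi-\phi'\|_C^p,
\end{align*}
which gives the definition of uniform incremental exponential stability with $\lambda=\lambda_0/p$ and $k=\max\{1,(k_2/k_1)^{1/p}\}$. Both constants are independent of $t_0$, which yields uniformity.

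\textbf{Main obstacle.} The delicate step is the Dini derivative bookkeeping at $t^*$. First, the paper's $D^+\bar V_I$ is defined along solutions through the trajectory pair, so $D^+v(t^*)$ must be identified with $D^+\bar V_I(t^*,x(t^*),x'(t^*))$ evaluated at the segments $x_{t^*},x'_{t^*}$. Second, the product rule $D^+w = \lambda_0 w + e^{\lambda_0(\cdot)}D^+v$ holds because the exponential factor is differentiable and positive. Third, one must justify $D^+w(t^*)\ge0$ at the first crossing, which follows from the choice of $t^*$ as an infimum together with continuity. Finally, the degenerate case $\phi=\phi'$ is trivial by uniqueness of solutions.
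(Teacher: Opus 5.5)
Your proposal is correct and follows essentially the same route as the paper, which simply defers to the proof of \cite[Theorem 2]{EA:20} applied along the pair of trajectories; your first-crossing argument for $e^{\lambda_0(t-t_0)}\bar V_I(t,x(t),x'(t))$, with $\lambda_0<k_3/k_2$ and $e^{\lambda_0 r}\le\rho$, is that argument written out explicitly. Your written-out version also shows that only existence, uniqueness, and $\Omega$-invariance of solutions (item (i) of Standing Assumption~\ref{asm:sol}) are needed, not the uniform bound in item (ii) that the paper's proof invokes.
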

\begin{proof}
    Under item~\ref{iem2}) of Standing Assumption~\ref{asm:sol}, the proof is similar to that of~\cite[Theorem 2]{EA:20}, since we consider the uniform stability property with respect to time.
\end{proof}

Item~\ref{iem2} of Standing Assumption~\ref{asm:sol} can be removed from Propositions~\ref{prop:iLK} and~\ref{prop:iLR} if one assumes the existence of an equilibrium $x^* \in \Omega$, where by an equilibrium we mean that $\phi(t) = x^*$ for all $t \in [-r,0]$ implies $f(t,\phi)=\0_n$ for all $t \in \bR$. By fixing $\phi'(t) = x^*$ for $t \in [-r,0]$, Propositions~\ref{prop:iLK} and~\ref{prop:iLR} respectively reduce to~\cite[Theorem 2.5]{PK:13} and~\cite[Theorem 2]{EA:20}, each of which guarantees the uniform boundedness of $\chi(t_0,\phi)(t)$. 

As a further relaxation, Proposition~\ref{prop:iLK} can be extended to uniform incremental asymptotic stability by replacing the norms in~\eqref{eq2:iLK} with class~$\cal K_\infty$ functions, relying on~\cite[Theorem 2.3]{PK:13}. Such an extension of Proposition~\ref{prop:iLR} can be found in~\cite[Lemma 1]{CPR:13}.
}


\subsection{\color{black}Differential Lyapunov--Krasovskii Functionals}\label{sec:LK}

{\color{black}In this subsection, we develop a contraction framework based on the Lyapunov--Krasovskii method, which can lead to numerically tractable conditions as discussed above. The goal is to show that, for uniform incremental exponential stability, the existence of an incremental Lyapunov--Krasovskii functional and that of a differential Lyapunov--Krasovskii functional are equivalent.

To establish this equivalence, we consider an integral version of the incremental Lyapunov--Krasovskii functional from Proposition~\ref{prop:iLK}, defined as follows.
}

\begin{secdefn}
Consider the system~\eqref{eq:TDS} and its auxiliary system~\eqref{eq:aTDS}.
    A functional $V_I:\bR \times C([-r,0],\Omega) \times C([-r,0],\Omega) \to \bR$ is said to be an \emph{incremental Lyapunov--Krasovskii functional} for the system~\eqref{eq:TDS} if there exist $k_1, k_2, k_3 > 0$ and $1 \le p \in \bR$ such that
    \begin{subequations}\label{eq:iLK}
    \begin{align}
       & k_1 |\phi(0) - \phi'(0)|^p \le V_I(t,\phi,\phi') \le k_2 \|\phi - \phi'\|_C^p
       \label{eq:iLK-a}\\
        &V_I(t,\chi_t(t_0,\phi),\chi_t(t_0,\phi')) - V_I(t_0,\phi,\phi') \nonumber\\
        &\quad \le  - k_3 \int_{t_0}^t |\chi (t_0,\phi)(\tau) - \chi (t_0,\phi')(\tau)|^p d\tau
        \label{eq:iLK-b}
    \end{align}
    \end{subequations}
    for any $t, t_0 \in \bR$ ($t \ge t_0$) and $(\phi, \phi') \in C([-r,0],\Omega) \times C([-r,0],\Omega)$.
    \red
\end{secdefn}

If $V_I(t, \phi, \phi')$ is continuous, then~\eqref{eq2:iLK-b} for almost all $t$ implies~\eqref{eq:iLK-b}. Conversely, if~\eqref{eq:iLK-b} holds for all $(t_0, \phi, \phi') \in \bR \times C([-r,0],\Omega) \times C([-r,0],\Omega)$, then Standing Assumption~\ref{asm:sol} and $f \in C^2(\bR \times C([-r,0],U),\bR^n)$ guarantee~\eqref{eq2:iLK-b} everywhere.

In the delay-free case, contraction analysis provides an alternative framework for incremental stability analysis~\cite{LS:98,FS:14,Bullo:22}. The main idea is to develop an incremental stability condition via analysis of variational systems, leading to the construction of a differential Lyapunov (or called Finsler Lyapunov) function. Extending this idea to time-delay systems, we introduce a differential Lyapunov--Krasovskii functional as follows.

\begin{secdefn}
Consider the system~\eqref{eq:TDS} and its variational system~\eqref{eq:vTDS}.
    A functional $V_D: \bR \times C([-r,0],\Omega) \times C([-r,0],\bR^n) \to \bR$ is said to be a \emph{differential Lyapunov--Krasovskii functional} for the system~\eqref{eq:TDS} if there exist $k_1, k_2, k_3 > 0$ and $1 \le p \in \bR$ such that
    \begin{subequations}\label{cond:dLK}
    \begin{align}
        &k_1 |\delta \phi (0)|^p \le V_D(t,\phi,\delta \phi) \le k_2 \|\delta \phi\|_C^p
        \label{cond:dLK-a}\\
        &V_D(t,\chi_t(t_0,\phi), (D_\phi \chi (t_0,\phi))_t \delta \phi) - V_D(t_0,\phi,\delta \phi) \nonumber\\
        &\quad \le  - k_3 \int_{t_0}^t |(D_\phi \chi (t_0,\phi)(\tau)) \delta \phi|^p d\tau
        \label{cond:dLK-b}
    \end{align}
    \end{subequations}
    for any $t, t_0 \in \bR$ ($t \ge t_0$) and $(\phi,\delta \phi) \in C([-r,0],\Omega) \times C([-r,0],\bR^n)$.
    \red
\end{secdefn}

A remark analogous to the incremental case applies to the integral form~\eqref{cond:dLK-b} and its corresponding differential form.

For exponential stability of delay-free systems, it is known that an incremental Lyapunov function can be constructed from a differential Lyapunov function, and vice versa~\cite{KB:24}. As one of the main results of this paper, we generalize this to time-delay systems as follows.

\begin{secthm}\label{thm:iLK}
    Let~$1 \le p \in \bR$.
    A system~\eqref{eq:TDS} admits an incremental Lyapunov--Krasovskii functional {\color{black}on $\bR \times C([-r,0],\Omega) \times C([-r,0],\Omega)$} if it admits a continuous differential Lyapunov--Krasovskii functional {\color{black}on $\bR \times C([-r,0],\Omega) \times C([-r,0],\bR^n)$}. Conversely, a system~\eqref{eq:TDS} admits a differential Lyapunov--Krasovskii functional {\color{black}on $\bR \times C([-r,0],\Omega) \times C([-r,0],\bR^n)$} if it admits an incremental Lyapunov--Krasovskii functional  {\color{black}on $\bR \times C([-r,0],\Omega) \times C([-r,0],\Omega)$}.
\end{secthm}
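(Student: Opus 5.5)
We will follow the delay-free construction of~\cite{KB:24}, proving each direction by an explicit formula.

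\emph{Differential $\Rightarrow$ incremental.} Given a continuous differential Lyapunov--Krasovskii functional $V_D$, we define the incremental functional by path integration along straight lines:
\begin{align*}
V_I(t,\phi,\phi') := \inf_{\gamma} \Big(\int_0^1 V_D\big(t,\gamma(s),\partial_s\gamma(s)\big)^{1/p} ds\Big)^{p},
\end{align*}
where the infimum runs over $C^1$ paths $\gamma:[0,1]\to C([-r,0],\Omega)$ from $\phi'$ to $\phi$. Convexity of $\Omega$ lets us use $\gamma(s)=\phi'+s(\phi-\phi')$. With this path, $\partial_s\gamma=\phi-\phi'$, and the upper bound in~\eqref{cond:dLK-a} yields $V_I\le k_2\|\phi-\phi'\|_C^p$.

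For the lower bound, we use~\eqref{cond:dLK-a} for any path to get $V_I\ge k_1(\int_0^1|\partial_s\gamma(s)(0)|ds)^p\ge k_1|\phi(0)-\phi'(0)|^p$.

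For the decrease~\eqref{eq:iLK-b}, we transport a near-optimal path $\gamma$ by the flow, $\gamma_t(s):=\chi_t(t_0,\gamma(s))$. By Lemma~\ref{lem:sol}, $\partial_s\gamma_t(s)=(D_\phi\chi(t_0,\gamma(s)))_t\,\partial_s\gamma(s)$, which is exactly the variational solution. Applying~\eqref{cond:dLK-b} pointwise in $s$ then gives
\[
V_D(t,\gamma_t(s),\partial_s\gamma_t(s)) \le V_D(t_0,\gamma(s),\partial_s\gamma(s)) - k_3\int_{t_0}^t|\partial_s\gamma(\tau;s)|^p d\tau .
\]
Integrating this over $s$ must be combined with the $p$-th root and $p$-th power. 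When $p=1$ this is direct. For $p>1$ we handle the $(\cdot)^{1/p}$ nonlinearity with the elementary inequality $(a-b)^{1/p}\le a^{1/p}-b/(p\,a^{1/p'})$, together with the bound $a\le k_2\|\partial_s\gamma\|^p$ and Jensen/H\"older in $s$. We then use $\int_0^1|\partial_s\gamma(\tau;s)|ds\ge|\chi(t_0,\phi)(\tau)-\chi(t_0,\phi')(\tau)|$, possibly shrinking $k_3$.

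If the infimum is awkward for continuity, we instead define $V_I$ directly through the straight path with the integral of $V_D$ (no root). We then exploit that the decrease inequality is linear in $V_D$, giving $V_I(t,\chi_t\phi,\chi_t\phi')\le\int_0^1V_D(t_0,\ldots)ds-k_3\int\!\!\int|\cdot|^p$. This requires defining $V_I(t,\cdot,\cdot)$ via the transported path, i.e.\ as the infimum over paths, so some infimum seems unavoidable. We expect the technical points to be measurability and continuity in $s$ of the integrands, which follow from continuity of $V_D$ and of $D_\phi\chi$ (Standing Assumption~\ref{asm:sol}), and Fr\'echet differentiability of $s\mapsto\chi_t(t_0,\gamma(s))$ via the chain rule.

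\emph{Incremental $\Rightarrow$ differential.} Given $V_I$, we define
\[
V_D(t,\phi,\delta\phi):=\limsup_{h\to0^+}\frac{V_I(t,\phi+h\delta\phi,\phi)}{h^p},
\]
with $\phi+h\delta\phi$ kept in $\Omega$; if $\phi$ lies on the boundary, we approximate by interior points or extend. From~\eqref{eq:iLK-a}, $k_1|\delta\phi(0)|^p\le V_D\le k_2\|\delta\phi\|_C^p$.

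For~\eqref{cond:dLK-b}, we apply~\eqref{eq:iLK-b} to the pair $(\phi+h\delta\phi,\phi)$ and divide by $h^p$. Fr\'echet differentiability of the flow (Lemma~\ref{lem:sol}) gives
\[
\chi_t(t_0,\phi+h\delta\phi)=\chi_t(t_0,\phi)+h(D_\phi\chi(t_0,\phi))_t\delta\phi+o(h).
\]
The integral term then converges to $\int_{t_0}^t|(D_\phi\chi(t_0,\phi)(\tau))\delta\phi|^pd\tau$, uniformly on compact intervals by continuity of $D_\phi\chi$ in $\tau$.

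The main obstacle is the left-hand side, where the first argument is $\chi_t(t_0,\phi+h\delta\phi)$ rather than $\chi_t(t_0,\phi)+h(D_\phi\chi)_t\delta\phi$. We plan to resolve this by defining $V_D$ with the limsup taken over all perturbations $\psi$ with $\|\psi-\delta\phi\|_C\to0$, namely
\[
V_D(t,\phi,\delta\phi):=\limsup_{h\to0^+,\,\psi\to\delta\phi}h^{-p}V_I(t,\phi+h\psi,\phi).
\]
This makes the left-hand side comparable via the $o(h)$ remainder. The bounds~\eqref{cond:dLK-a} survive this change by the continuity of the norms.
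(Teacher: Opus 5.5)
Your differential $\Rightarrow$ incremental direction works once you commit to your fallback: the infimum over paths of $\int_0^1 V_D(t,\gamma(s),\frac{d\gamma(s)}{ds})\,ds$, with no root. The remaining steps are to transport $\gamma$ by the flow, apply \eqref{cond:dLK-b} pointwise in $s$, exchange the $s$- and $\tau$-integrals, and use Jensen in the form $\int_0^1|D_s\chi(t_0,\gamma(s))(\tau)|^p\,ds\ge|\chi(t_0,\phi)(\tau)-\chi(t_0,\phi')(\tau)|^p$. That is exactly the paper's argument. Your rooted variant can also be closed, with constant $k_3/p$, via concavity and a weighted H\"older inequality in $s$, but it only adds work.

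The incremental $\Rightarrow$ differential direction has a genuine gap, and your repair points the inequality the wrong way. Write $x_t:=\chi_t(t_0,\phi)$ and $\delta x_t:=(D_\phi\chi(t_0,\phi))_t\delta\phi$. Condition \eqref{cond:dLK-b} needs an \emph{upper} bound on $V_D(t,x_t,\delta x_t)$. However, \eqref{eq:iLK-b} only controls $V_I(t,\cdot,\cdot)$ at pairs in the range of $\chi_t(t_0,\cdot)$, and $V_I$ is not assumed continuous. So with the straight-line definition, the $o(h)$ discrepancy between $\chi_t(t_0,\phi+h\delta\phi)$ and $x_t+h\,\delta x_t$ cannot be absorbed. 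Your joint $\limsup$ over $(h,\psi)$ with $\psi\to\delta x_t$ turns $V_D(t,x_t,\delta x_t)$ into a supremum over \emph{all} nearby perturbations. The flow-generated ones, $\psi_h:=(\chi_t(t_0,\phi+h\delta\phi)-x_t)/h\to\delta x_t$, are just one admissible family. They therefore give only $\limsup_h h^{-p}V_I(t,\chi_t(t_0,\phi+h\delta\phi),x_t)\le V_D(t,x_t,\delta x_t)$, which is the useless direction. Bounding the supremum from above would require every $x_t+h\psi$ to be reachable by the flow, and this fails for delay systems: for $t\ge t_0+r$, the range of $\chi_t(t_0,\cdot)$ consists of $C^1$ segments.

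Here is a concrete failure. Take a valid $V_I$ and raise $V_I(t,\eta,\eta')$ to $k_2\|\eta-\eta'\|_C^p$ at every pair not of the form $(\chi_t(t_0,\phi_0),\chi_t(t_0,\phi_0'))$ with $t_0<t$. Then \eqref{eq:iLK} still holds, but your $V_D$ becomes $k_2\|\delta\phi\|_C^p$ identically, which violates \eqref{cond:dLK-b} for $t$ close to $t_0$.

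The missing idea is to define $V_D$ as an \emph{infimum} over competitors that the flow maps to competitors. The paper sets $V_D(t,\phi,\delta\phi):=\inf_\gamma\limsup_{s\to0^+}s^{-p}V_I(t,\gamma(0),\gamma(s))$, the infimum running over $C^1$ curves $\gamma$ in $C([-r,0],\Omega)$ with $\gamma(0)=\phi$ and $\frac{d\gamma}{ds}(0)=\delta\phi$. By Lemma~\ref{lem:sol} and the chain rule, $s\mapsto\chi_t(t_0,\gamma(s))$ is such a curve at $(x_t,\delta x_t)$, which yields the upper bound in the right direction. Then \eqref{eq:iLK-b}, Fatou's lemma on the integral term, and the infimum over $\gamma$ give \eqref{cond:dLK-b}. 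The bounds \eqref{cond:dLK-a} follow because every admissible curve satisfies $s^{-1}|\gamma(s)(0)-\phi(0)|\to|\delta\phi(0)|$ and $s^{-1}\|\gamma(s)-\phi\|_C\to\|\delta\phi\|_C$.
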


\begin{proof}
The proof is in Appendix~\ref{app:iLK}.
\end{proof}

{\color{black}In contrast to Proposition~\ref{prop:iLK}, extending Theorem~\ref{thm:iLK} to class~$\mathcal{K}_\infty$ functions is not  straightforward. As shown in Appendix~\ref{app:iLK}, contraction analysis connects the variational state~$\delta \phi$ with a pair of states~$(\phi, \phi')$ via path integration. When class~$\mathcal{K}_\infty$ functions are used, it is not immediate whether the path integral~$\int_0^1 \alpha(\|d\gamma(s)/ds\|_C) ds$ can be related to a pair of states~$(\phi, \phi')$. Addressing this issue to enable asymptotic stability analysis is included in future work.}


\subsection{Lyapunov--Razumikhin Approach}

In this subsection, we propose a differential-type Lyapunov--Razumikhin function and derive an incremental {\color{black}exponential} stability condition based on it. Unlike the Lyapunov--Krasovskii framework, the connection between the incremental and differential types is not yet established.

Recalling Lemma~\ref{lem:sol}, the upper-right Dini derivative of $\bar V_D(t, \phi (0), \delta \phi (0))$ along the solutions to the system~\eqref{eq:TDS} and its variational system~\eqref{eq:vTDS} is denoted by
\begin{align*}
    &D^+ \bar V_D(t, \phi (0), \delta \phi (0)) \\
    &:=\limsup_{h \to 0^+} \frac{1}{h} \bigl(\bar V_D(t+h, \chi (t,\phi)(t+h), (D_\phi \chi (t,\phi)(t+h)) \delta \phi)\\
    &\hspace{20mm}  - \bar V_D(t, \phi(0), \delta \phi (0)) \bigr).
\end{align*}
We introduce the following differential Lyapunov--Razumikhin function.

\begin{secdefn}
    Consider the system~\eqref{eq:TDS} and its variational system~\eqref{eq:vTDS}.
    A function $\bar V_D: \bR \times \Omega \times \bR^n \to \bR$ is said to be a \emph{differential Lyapunov--Razumikhin function} for the system~\eqref{eq:TDS} if there exist $k_1, k_2, k_3 > 0$, $1 \le p \in \bR$, and~$\color{black}1 < \rho \in \bR$ such that
    \begin{subequations}\label{cond:dLR}
    \begin{align}
        k_1 |\delta x|^p \le \bar V_D(t,x,\delta x) \le k_2 |\delta x|^p
        \label{cond:dLR-a}
    \end{align}
    holds for any $t \in \bR$ and $(x, \delta x) \in \Omega \times \bR^n$, and   \begin{align}
        &\bar V_D (t+\theta, \phi(\theta), \delta \phi(\theta)) 
        \le \rho \bar V_D(t, \phi (0), \delta \phi (0)) \nonumber\\
        &\implies \quad D^+ \bar V_D(t, \phi (0), \delta \phi (0)) \le  - k_3 |\delta \phi (0) |^p \label{cond:dLR-b2} 
     \end{align}
    \end{subequations}
    holds for any $\theta \in [-r,0]$, $t \in \bR$, and $(\phi, \delta \phi) \in C([-r,0],\Omega) \times C([-r,0],\bR^n)$.
    \red
\end{secdefn}

If a system~\eqref{eq:TDS} admits a differential Lyapunov--Razumikhin function, we can show incremental exponential stability as stated below.

\begin{secthm}\label{thm:dLR}
    A system~\eqref{eq:TDS} is uniformly incrementally {\color{black}exponentially} stable on $\Omega$ if, for some $1 \le p \in \bR$, it admits a continuous differential Lyapunov--Razumikhin function {\color{black}on $\bR \times C([-r,0],\Omega) \times C([-r,0],\bR^n)$}. 
\end{secthm}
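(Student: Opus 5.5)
Our plan has two stages. First, show that the variational system \eqref{eq:vTDS} is uniformly exponentially stable, uniformly along every solution. Then integrate the resulting variational bound along a path joining $\phi$ to $\phi'$.

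\emph{Stage one.} Fix $(t_0,\phi)$ and consider the linear time-varying delay system \eqref{eq:vTDS} along $\chi(t_0,\phi)$. Set $W(t,\delta x):=\bar V_D(t,\chi(t_0,\phi)(t),\delta x)$. By \eqref{cond:dLR-a}, $W$ satisfies $k_1|\delta x|^p\le W\le k_2|\delta x|^p$. By Lemma~\ref{lem:sol} and the cocycle property of solutions, the Dini derivative of $W$ along \eqref{eq:vTDS} coincides with $D^+\bar V_D(t,\chi_t(0),\delta x_t(0))$. Hence \eqref{cond:dLR-b2} is precisely the Razumikhin condition of \cite[Theorem 2]{EA:20}, applied to $W$ for the system \eqref{eq:vTDS} at its zero equilibrium.

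We then repeat the argument of \cite[Theorem 2]{EA:20}, which is the same argument invoked in Proposition~\ref{prop:iLR}. Set $\lambda_0:=\min\{k_3/k_2,\ \ln\rho/r\}/2$ and consider $\sup_{\theta\in[-r,0]} e^{\lambda_0 p(t+\theta)}W(t+\theta,\delta x(t+\theta))$. The Razumikhin implication with $\rho>1$ shows that this quantity is nonincreasing, because at any time where it would increase, the premise of \eqref{cond:dLR-b2} holds. This yields
\begin{align*}
|(D_\phi\chi(t_0,\phi)(t))\delta\phi|\le k e^{-\lambda(t-t_0)}\|\delta\phi\|_C,\quad k:=(k_2/k_1)^{1/p}.
\end{align*}
The key point to check is that $k$ and $\lambda$ depend only on $k_1,k_2,k_3,p,\rho,r$, and not on $(t_0,\phi)$. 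This uniformity is what the subsequent integration requires.

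\emph{Stage two.} Given $\phi,\phi'\in C([-r,0],\Omega)$, let $\gamma(s):=\phi+s(\phi'-\phi)$ for $s\in[0,1]$. Convexity of $\Omega$ gives $\gamma(s)\in C([-r,0],\Omega)$. Since $\chi(t_0,\cdot)(t)$ is Fréchet differentiable in $\phi$ (Lemma~\ref{lem:sol}), with derivative continuous along $\gamma$, the mean-value or fundamental-theorem-of-calculus identity for Banach-space maps gives
\begin{align*}
\chi(t_0,\phi')(t)-\chi(t_0,\phi)(t)=\int_0^1 (D_\phi\chi(t_0,\gamma(s))(t))(\phi'-\phi)\,ds.
\end{align*}
Applying the stage-one bound with $\delta\phi=\phi'-\phi$ at each base point $\gamma(s)$, and using the uniformity of $k$ and $\lambda$, we obtain $|\chi(t_0,\phi)(t)-\chi(t_0,\phi')(t)|\le ke^{-\lambda(t-t_0)}\|\phi-\phi'\|_C$, which is the claim.

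\emph{Main obstacle.} We expect the hardest step to be the rigorous Razumikhin argument in stage one. Two issues arise there. First, the Dini derivative in \eqref{cond:dLR-b2} is defined at initial time $t$ with data $(\chi_t,\delta x_t)$, so it must be matched with the derivative along a trajectory started at $t_0$. This matching uses uniqueness and the identity $D_\phi\chi(t_0,\phi)(t+h)=D_\phi\chi(t,\chi_t)(t+h)\circ D_\phi\chi_t(t_0,\phi)$. Second, one must justify continuity in $s$ of $s\mapsto D_\phi\chi(t_0,\gamma(s))(t)(\phi'-\phi)$, so that the path integral is valid. This should follow from the $C^2$ assumption on $f$ and \cite[Theorem 2.4.1]{HL:13}.
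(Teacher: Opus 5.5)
Your proposal is correct and follows essentially the same route as the paper. Like the paper, you apply \cite[Theorem 2]{EA:20} to the variational system to get an exponential bound whose constants do not depend on $(t_0,\phi)$, and then integrate that bound along the straight-line path from $\phi$ to $\phi'$ via the fundamental theorem of calculus (the paper's \eqref{eq:dist} step); your cocycle and continuity remarks fill in details the paper leaves implicit.

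The only slip is in your explicit constant. With the weight $e^{\lambda_0 p(t+\theta)}$ you need $\lambda_0 p\le k_3/k_2$ and $\lambda_0 p r\le\ln\rho$, so for $p>2$ your choice can violate both. Take $\lambda_0:=\min\{k_3/k_2,\ln\rho/r\}/(2p)$ rather than only halving the minimum.
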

\begin{proof}
The proof is in Appendix~\ref{app:dLR}.
\end{proof}

{\color{black}By reasoning similar to that for Theorem~\ref{thm:iLK}, it is at present unclear whether Theorem~\ref{thm:dLR} can be extended to class~$\mathcal{K}_\infty$ functions. Investigating this extension, as well as converse analysis, are left for future work.}



\section{Stabilizing Feedback Design}\label{sec:SFD}

In this section, we consider the stabilization of nonlinear time-delay systems with single delays based on the proposed results for Lyapunov--Krasovskii functionals. In particular, we extend the linear matrix inequality (LMI) framework in the linear case to the nonlinear case by virtue of contraction analysis.

For stabilizing control design, it is common to use Lyapunov--Krasovskii functionals depending on the state derivatives also. The results in Section~\ref{sec:LK} can readily be generalized to such cases. Let $| \cdot |$ be the Euclidean norm, and let $W([-r,0],\Omega)$ denote the Banach space of absolutely continuous functions $\phi:[-r,0] \to \Omega$ with $\dot \phi \in L_2[-r,0]$ (the space of square integrable functions) with respect to the following norm
\begin{align*}
    \|\phi\|_W := \max_{\theta \in [-r,0]} |\phi(\theta)| + \left(\int_{-r}^0 |\dot \phi(\theta)|^2 d\theta\right)^{\frac{1}{2}}.
\end{align*}
We then reformulate the inequalities in Theorem~\ref{thm:iLK} for a functional $V:\bR \times W([-r,0],\Omega) \times L_2[-r,0] \times W([-r,0],\bR^n) \times L_2[-r,0] \to \bR$ as follows
\begin{subequations}\label{eq:LK2}
\begin{align}
    &k_1 |\delta \phi (0)|^p \le V_D(t,\phi,\dot \phi,\delta \phi,\dot{\delta \phi}) \le k_2 \|\delta \phi\|_W^p\\
    &D^+ V_D(t,\phi,\dot \phi,\delta \phi,\dot{\delta \phi}) \le - k_3 |\delta \phi (0)|^p.
\end{align}
\end{subequations}
Namely, a system~\eqref{eq:TDS} is uniformly incrementally exponentially stable on $\Omega$ if it admits a continuous functional $V_D(t,\phi,\dot \phi,\delta \phi,\dot{\delta \phi})$ satisfying~\eqref{eq:LK2}.

Based on \eqref{eq:LK2}, we design a stabilizing feedback for the following nonlinear system with a constant single-delay
\begin{align}\label{eq:sys_cd}
    \dot x(t) = f(x(t), x(t-r)) + Bu(t), \quad x_{t_0} = \phi,
\end{align}
where $\phi \in W([-r,0],\Omega)$, $r > 0$, $u(t) \in \bR^m$, and $f:\Omega \times \Omega \to \bR^n$ is of class $C^2$ and bounded. To stabilize the system~\eqref{eq:sys_cd}, we implement the following state-feedback controller
\begin{align}\label{eq:FBc}
    u(t) = F_1 x(t) + F_2 x(t-r)
\end{align}
where $F_1, F_2 \in \bR^{m \times n}$ are controller gains. 

To formulate stabilizing control design in the LMI framework, suppose that there exist finite sets of matrices $\bA_0 := \{A_0^{(1)},\dots,A_0^{(k)}\}$ and $\bA_1 := \{A_1^{(1)},\dots,A_1^{(l)}\}$ such that
\begin{align}\label{eq:conv}
\left\{
    \begin{alignedat}{2}
        \frac{\partial f(x, z)}{\partial x} \in \text{Conv}(\bA_0)\\
        \frac{\partial f(x, z)}{\partial z} \in \text{Conv}(\bA_1)
    \end{alignedat}\right.,
    \quad \forall (x, z) \in \Omega \times \Omega.
\end{align}

Now, we have the following proposition to design a stabilizing state-feedback controller for nonlinear time-delay systems.
\begin{secprop}\label{prop:BMI}
Consider a system~\eqref{eq:sys_cd} admitting a convex relaxation~\eqref{eq:conv}. Given $r \ge 0$, suppose that there exist $P, R, S \succ \0_{n \times n}$, $P_2, P_3 \in \bR^{n \times n}$, and $F_1, F_2 \in \bR^{m \times n}$ satisfying
    \begin{align}\label{eq:BMI}
        &\begin{bmatrix}
            \Phi_{11}^{(i)} & \Phi_{12}^{(i)} & \Phi_{13}^{(j)}\\
            * & -P_3 - P_3^\top + r^2R & \Phi_{23}^{(j)}\\
            * & * & -S-R
        \end{bmatrix}
        \prec \0_{3n \times 3n}
    \end{align}
    for all $i = 1, \dots, k$ and $j = 1, \dots, l$, where
    \begin{align*}
        \Phi_{11}^{(i)} &= (A_0^{(i)} + BF_1)^\top P_2 + P_2^\top (A_0^{(i)} + BF_1) + S - R\\
        \Phi_{12}^{(i)} &= P - P_2^\top + (A_0^{(i)} + BF_1)^\top P_3\\
        \Phi_{13}^{(j)} &= P_2^\top (A_1^{(j)} + BF_2) + R\\
        \Phi_{23}^{(j)} &= P_3^\top (A_1^{(j)} + BF_2)
    \end{align*}
    and $*$ denotes the symmetric term. Then, the controller~\eqref{eq:FBc} renders the closed-loop system uniformly incrementally {\color{black}exponentially} stable if $\Omega$ satisfies Standing Assumption~\ref{asm:sol} for the closed-loop system.
\end{secprop}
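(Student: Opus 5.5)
The plan is to build a differential Lyapunov--Krasovskii functional of Fridman type for the closed-loop variational system and then invoke the $W$-space version of Theorem~\ref{thm:iLK}, i.e., conditions~\eqref{eq:LK2}. Write $\bar A_0(t) := \partial f/\partial x(x(t),x(t-r)) + BF_1$ and $\bar A_1(t) := \partial f/\partial z(x(t),x(t-r)) + BF_2$. The closed-loop variational system along a solution $x$ is then
\begin{align*}
\dot{\delta x}(t) = \bar A_0(t)\delta x(t) + \bar A_1(t)\delta x(t-r).
\end{align*}
We take $p=2$ and consider
\begin{align*}
V_D = \delta x(t)^\top P \delta x(t) + \int_{t-r}^{t} \delta x(s)^\top S \delta x(s)\, ds + r\int_{-r}^{0}\int_{t+\theta}^{t} \dot{\delta x}(s)^\top R\, \dot{\delta x}(s)\, ds\, d\theta .
\end{align*}
Since $V_D \ge \delta x(t)^\top P\delta x(t)$ and $P\succ 0$, the lower bound with $k_1=\lambda_{\min}(P)$ is immediate. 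The upper bound $k_2\|\delta\phi\|_W^2$ follows because the last term is bounded by $r^2\lambda_{\max}(R)\int_{-r}^0|\dot{\delta\phi}|^2$ and the other two terms by multiples of $\max|\delta\phi|^2$.

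\textbf{Step 1: differentiate $V_D$.} Along solutions of the variational system,
\begin{align*}
\dot V_D ={}& 2\delta x^\top P\dot{\delta x} + \delta x^\top S\delta x - \delta x(t-r)^\top S\delta x(t-r) \\
&+ r^2\dot{\delta x}^\top R\dot{\delta x} - r\int_{t-r}^t \dot{\delta x}^\top R\dot{\delta x}\,ds .
\end{align*}
Jensen's inequality bounds the last term by $-(\delta x(t)-\delta x(t-r))^\top R(\delta x(t)-\delta x(t-r))$. Following the descriptor method, we then add the zero term
\begin{align*}
2\left(\delta x^\top P_2^\top + \dot{\delta x}^\top P_3^\top\right)\left(\bar A_0\delta x + \bar A_1\delta x(t-r) - \dot{\delta x}\right).
\end{align*}
Collecting terms in $\xi = \mathrm{col}(\delta x(t),\dot{\delta x}(t),\delta x(t-r))$ gives $\dot V_D \le \xi^\top \Phi(t)\xi$, where $\Phi(t)$ is exactly the matrix in~\eqref{eq:BMI} with $A_0^{(i)}$ and $A_1^{(j)}$ replaced by $\partial f/\partial x$ and $\partial f/\partial z$ evaluated along the trajectory. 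One should check that the $(1,3)$ block is $P_2^\top\bar A_1 + R$ and the $(3,3)$ block is $-S-R$; both come from the Jensen term.

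\textbf{Step 2: pass to the vertices.} The matrix $\Phi$ is affine in $(\bar A_0,\bar A_1)$. Its $(1,1)$ and $(1,2)$ blocks depend only on $A_0$, and its $(1,3)$ and $(2,3)$ blocks depend only on $A_1$. Write the Jacobians as $\sum_i\alpha_i A_0^{(i)}$ and $\sum_j\beta_j A_1^{(j)}$ using~\eqref{eq:conv}. Then $\Phi(t) = \sum_{i,j}\alpha_i\beta_j\Phi^{(i,j)}$, and negative definiteness at all vertices gives a uniform margin $\Phi(t)\preceq -\epsilon \I_{3n}$, with $\epsilon$ the minimum over finitely many vertices. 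Hence $\dot V_D \le -\epsilon|\delta x(t)|^2$, which is~\eqref{eq:LK2} with $k_3=\epsilon$.

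\textbf{Step 3: conclude.} By the $W$-version of Theorem~\ref{thm:iLK} stated before the proposition, uniform incremental exponential stability follows, using Standing Assumption~\ref{asm:sol} for the closed loop.

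The main obstacle is justifying the differentiation and Jensen step rigorously in the $W$/$L_2$ setting. Here $\dot{\delta x}$ is only square integrable, so we would argue with absolutely continuous functions and almost-everywhere derivatives, noting that the integral inequality form suffices. A second point to verify is the sign convention: the $P$ in the $(1,2)$ block must come from $2\delta x^\top P\dot{\delta x}$ and not from $\bar A_0$.
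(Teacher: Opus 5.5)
Your proposal is correct and is essentially the paper's proof. The paper takes exactly the functional~\eqref{pf1:BMI} and defers the rest to the linear-case argument of~\cite[Sec.~3.6.2]{Fridman:14b}, which is the descriptor-method and Jensen computation you carry out to verify~\eqref{eq:LK2} with $p=2$; your convex-combination step $\Phi(t)=\sum_{i,j}\alpha_i\beta_j\Phi^{(i,j)}$ with a uniform margin, and your remark on absolute continuity of $V_D$ along $W$-solutions, spell out details the paper leaves implicit.
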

\begin{proof}
    Consider the following candidate of differential Lyapunov--Krasovskii functional
    \begin{align}\label{pf1:BMI}
        V_D(t,x_t,\dot x_t,\delta x_t,\dot{\delta x_t})
        &:= \delta x^\top(t) P \delta x(t) + \int_{t-r}^t \delta x^\top(s) S \delta x(s)ds \nonumber\\
        &\qquad  + r\int_{-r}^0 \int_{t+\theta}^t \dot{\delta x}^\top(s) R \dot{\delta x}(s) ds d\theta.
    \end{align}
    Then the rest is similar to the linear case~\cite[Sec 3.6.2]{Fridman:14b}.
\end{proof}

The inequality~\eqref{eq:BMI} is not an LMI because it contains nonlinear terms such as $P_2^\top BF_1$. However, this can be made an LMI by specifying $P_3 = \varepsilon P_2$ with a tunable scalar parameter $\varepsilon$.

\begin{secprop}\label{prop:LMI}
Consider a system~\eqref{eq:sys_cd} admitting a convex relaxation~\eqref{eq:conv}. Given $r \ge 0$ and $\varepsilon > 0$, suppose that there exist $\bar P, \bar R, \bar S \succ \0_{n \times n}$, $\bar P_2 \in \bR^{n \times n}$ and $X_1, X_2 \in \bR^{m \times n}$ satisfying the following LMI
    \begin{align}\label{eq:LMI}
        &\begin{bmatrix}
            \Psi_{11}^{(i)} & \Psi_{12}^{(i)} & \Psi_{13}^{(j)}\\
            * & -\varepsilon(\bar P_2^\top + \bar P_2) + r^2\bar R & \Psi_{23}^{(j)}\\
            * & * & -\bar S - \bar R
        \end{bmatrix}
        \prec \0_{3n \times 3n}
    \end{align}
   for all $i = 1, \dots, k$ and $j = 1, \dots, l$, where
    \begin{align*}
        \Psi_{11}^{(i)} &= A_0^{(i)}\bar P_2 + BX_1 + (A_0^{(i)}\bar P_2 + BX_1)^\top + S - R\\
        \Psi_{12}^{(i)} &= \bar P - \bar P_2 +\varepsilon(A_0^{(i)}\bar P_2 + BX_1)^\top\\
        \Psi_{13}^{(j)} &= A_1^{(j)}\bar P_2 + BX_2 + \bar R\\
        \Psi_{23}^{(j)} &= \varepsilon(A_1^{(j)}\bar P_2 + BX_2).
    \end{align*}
    Then, the controller~\eqref{eq:FBc} with $F_i:=X_i \bar P_2^{-1}$, $i=1,2$ renders the closed-loop system uniformly incrementally {\color{black}exponentially} stable if $\Omega$ satisfies Standing Assumption~\ref{asm:sol} for the closed-loop system.
\end{secprop}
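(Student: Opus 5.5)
Proposition~\ref{prop:LMI} will be reduced to Proposition~\ref{prop:BMI} by a congruence transformation together with a change of variables. First impose the restriction $P_3 = \varepsilon P_2$ in \eqref{eq:BMI}, and require $P_2$ to be invertible. Invertibility is automatic once the LMI holds: the $(2,2)$ block of \eqref{eq:LMI} gives $\varepsilon(\bar P_2+\bar P_2^\top) \succ r^2\bar R \succeq \0_{n\times n}$, so $\bar P_2$ is nonsingular.

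Next, given a feasible solution of \eqref{eq:LMI}, define the BMI variables by
\[
P_2 := \bar P_2^{-1},\qquad P := \bar P_2^{-\top}\bar P\bar P_2^{-1},\qquad S := \bar P_2^{-\top}\bar S\bar P_2^{-1},\qquad R := \bar P_2^{-\top}\bar R\bar P_2^{-1},
\]
together with $P_3 := \varepsilon P_2$ and $F_i := X_i\bar P_2^{-1}$. Congruence preserves positive definiteness, so $P,R,S\succ\0_{n\times n}$. I would then check that \eqref{eq:LMI} equals $T^\top(\cdot)T$ applied to \eqref{eq:BMI}, possibly after a transposition convention, with $T := \mathrm{diag}(\bar P_2,\bar P_2,\bar P_2)$. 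Block by block the check runs as follows.
\begin{itemize}
\item The $(1,1)$ block: $\bar P_2^\top (A_0^{(i)}+BF_1)^\top P_2\bar P_2 = (A_0^{(i)}\bar P_2 + BX_1)^\top$, using $F_1\bar P_2 = X_1$ and $P_2\bar P_2=\I_n$. The terms $S-R$ become $\bar S-\bar R$; the $S,R$ appearing in $\Psi_{11}$ should be read as $\bar S,\bar R$.
\item The $(1,2)$ block: $\bar P_2^\top(P - P_2^\top + \varepsilon(A_0^{(i)}+BF_1)^\top P_2)\bar P_2 = \bar P - \bar P_2^\top + \varepsilon(A_0^{(i)}\bar P_2+BX_1)^\top$.
\item The $(2,2)$, $(1,3)$ and $(2,3)$ blocks are handled in the same way.
\end{itemize}

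Since $T$ is nonsingular, the transformed matrix is negative definite if and only if the original is. Hence \eqref{eq:BMI} holds for all vertex pairs $(i,j)$, and Proposition~\ref{prop:BMI} yields uniform incremental exponential stability of the closed loop under Standing Assumption~\ref{asm:sol}.

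The main obstacle is making the congruence match exactly. In \eqref{eq:BMI}, $P_2$ appears transposed in some blocks ($P_2^\top(A_0+BF_1)$) and untransposed in others. Meanwhile, \eqref{eq:LMI} has $A_0^{(i)}\bar P_2$ rather than $\bar P_2^\top A_0^{(i)\top}$, and $\bar P - \bar P_2$ rather than $\bar P - \bar P_2^\top$. A naive congruence by $\mathrm{diag}(\bar P_2,\bar P_2,\bar P_2)$ may therefore produce transposed versions of the stated blocks.

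I would first try the dual route. Since $M\prec 0$ if and only if $M^\top\prec 0$, one can rewrite the Lyapunov--Krasovskii derivation with the roles of the descriptor multipliers transposed, for instance by taking $P_2$ with $P_2^\top$ interchanged, which is legitimate because $P_2$ is an unconstrained square matrix. Equivalently, congruence by $\mathrm{diag}(\bar P_2^\top,\bar P_2^\top,\bar P_2^\top)$ with $P_2 := \bar P_2^{-\top}$ should make the terms land as $A_0^{(i)}\bar P_2 + BX_1$. The delay-coupling blocks $\Psi_{13}$ and $\Psi_{23}$ involve $A_1^{(j)}$ and $F_2$, and I would check these carefully, since a transposition mismatch there would force reinterpreting \eqref{eq:BMI} in its transposed (dual) form.

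If exact matching fails, the fallback is to redo the computation of $D^+V_D$ for \eqref{pf1:BMI} along the variational closed loop. The variational system is $\dot{\delta x} = (\partial_x f + BF_1)\delta x(t) + (\partial_z f + BF_2)\delta x(t-r)$. I would use descriptor multipliers on the right, as in~\cite{Fridman:14b}, apply Jensen's inequality to the $R$ term, and pass to the vertices using \eqref{eq:conv} and convexity, obtaining \eqref{eq:LMI} directly after the congruence.
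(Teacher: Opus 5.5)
Your approach is the paper's, and it is correct: you impose $P_3=\varepsilon P_2$, change variables through $\bar P_2=P_2^{-1}$, apply the congruence $T^\top(\cdot)T$ with $T=\mathrm{diag}(\bar P_2,\bar P_2,\bar P_2)$, and invoke Proposition~\ref{prop:BMI}. You also run the argument in the logically needed direction (LMI $\Rightarrow$ BMI) and prove $\bar P_2$ is invertible from the $(2,2)$ block, which the paper leaves implicit. You correctly read the $S-R$ in $\Psi_{11}^{(i)}$ as $\bar S-\bar R$.

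The ``main obstacle'' you describe does not exist; it comes from a slip in your $(1,2)$ block. In fact
\[
\bar P_2^\top(-P_2^\top)\bar P_2=-(P_2\bar P_2)^\top\bar P_2=-\bar P_2,
\]
not $-\bar P_2^\top$, so that block is exactly $\Psi_{12}^{(i)}$. The other blocks match the same way:
\begin{itemize}
\item $\bar P_2^\top P_2^\top(A_1^{(j)}+BF_2)\bar P_2=A_1^{(j)}\bar P_2+BX_2$ gives $\Psi_{13}^{(j)}$ and $\Psi_{23}^{(j)}$;
\item $\bar P_2^\top(-\varepsilon P_2-\varepsilon P_2^\top)\bar P_2=-\varepsilon(\bar P_2^\top+\bar P_2)$ gives the $(2,2)$ block.
\end{itemize}
So the naive congruence matches every block, with no transposition issue.

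You should drop the ``dual route'', because it would actually break the match. With $P_2:=\bar P_2^{-\top}$ and congruence by $\mathrm{diag}(\bar P_2^\top,\bar P_2^\top,\bar P_2^\top)$, the $(1,1)$ block contains $(A_0^{(i)}+BF_1)\bar P_2^\top=A_0^{(i)}\bar P_2^\top+BX_1\bar P_2^{-1}\bar P_2^\top$. This is not $A_0^{(i)}\bar P_2+BX_1$ unless $\bar P_2$ is symmetric, so it fails for the prescribed gains $F_i=X_i\bar P_2^{-1}$. The fallback re-derivation of $D^+V_D$ is likewise unnecessary.
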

\begin{proof}
In~\eqref{eq:BMI}, let $P_3 = \varepsilon P_2$, and define $\bar P_2 = P_2^{-1}$, $X_1 = F_1 \bar P_2$, $X_2 = F_2 \bar P_2$, $\bar P = \bar P_2^\top P \bar P_2$, $\bar R = \bar P_2^\top R \bar P_2$, and $\bar S = \bar P_2^\top S \bar P_2$. Then, multiplying~\eqref{eq:BMI} by $\text{diag}(\bar P_2, \bar P_2, \bar P_2)$ and its transpose, from right and left, respectively, leads to~\eqref{eq:LMI}.
\end{proof}

    If \eqref{eq:BMI} or \eqref{eq:LMI} is feasible for a certain delay value $r_{\max} > 0$, then this is feasible for all $0 < r \le r_{\max}$. Namely, a designed controller stabilizes a system for any delay $r$ in the interval $(0, r_{\max}]$, which provides robustness against the delay length.


\section{Example}\label{sec:Exa}
{\color{black}
Consider the nonlinear system with a constant single-delay
\begin{align}\label{eq:exa}
    \left\{
    \begin{alignedat}{2}
        \dot x_1 &= -x_1(t) + \tanh x_2(t-5) + a \sin 2t\\
        \dot x_2 &= -\sin x_1(t) - x_2(t-5) + u(t),
    \end{alignedat}\right.
\end{align}
where $a \in \bR$. 
Our objective is to design the following stabilizing controller of the form
\begin{align}\label{eq2:exa}
    u(t) = F_1 x(t) + F_2 x(t-5)
\end{align}
based on Proposition~\ref{prop:LMI}.

The variational system of the open-loop system~\eqref{eq:exa} is given by
\begin{align*}
    \begin{bmatrix}
        \dot{\delta x}_1\\
        \dot{\delta x}_2
    \end{bmatrix}
    &=
    \begin{bmatrix}
        -1 & 0\\
        -\cos x_1(t) & 0
    \end{bmatrix}
    \begin{bmatrix}
        \delta x_1(t)\\
        \delta x_2(t)
    \end{bmatrix}\\
    &+
    \begin{bmatrix}
        0 & 1 - \tanh^2 x_2(t-5)\\
        0 & -1
    \end{bmatrix}
    \begin{bmatrix}
        \delta x_1(t-5)\\
        \delta x_2(t-5)
    \end{bmatrix}
    +
    \begin{bmatrix}
        0\\
        1
    \end{bmatrix}
    \delta u(t)
\end{align*}
Since $-1 \le \cos x_1 \le 1$ and $0 \le \tanh^2 x_2 \le 1$, \eqref{eq:conv} holds for
\begin{align*}
    A_0^{(1)} &:=
    \begin{bmatrix}
        -1 & 0\\
        1 & 0
    \end{bmatrix}, \;
    A_0^{(2)} :=
    \begin{bmatrix}
        -1 & 0\\
        -1 & 0
    \end{bmatrix}\\
    A_1^{(1)} &:=
    \begin{bmatrix}
        0 & 1\\
        0 & -1
    \end{bmatrix}, \;
    A_1^{(2)} :=
    \begin{bmatrix}
        0 & 0\\
        0 & -1
    \end{bmatrix}.
\end{align*}
Solving LMI~\eqref{eq:LMI} with $\varepsilon = 3.07 \times 10^{-3}$ yields the following feedback gains
\begin{align*}
    F_1 =
    \begin{bmatrix}
        0.0514 & -1.13
    \end{bmatrix}, \;
    F_2 =
    \begin{bmatrix}
        -0.00641 & 0.959
    \end{bmatrix}.
\end{align*}
Figure~\ref{fig:ex} shows closed-loop trajectories starting from different initial data. Regardless the value of~$a$, all trajectories converge to one another, illustrating incremental exponential stability of the closed-loop system. Note that when~$a=1$, the closed-loop system does not admit an equilibrium.}

\begin{figure}
    \centering
    \includegraphics[width=1.0\linewidth]{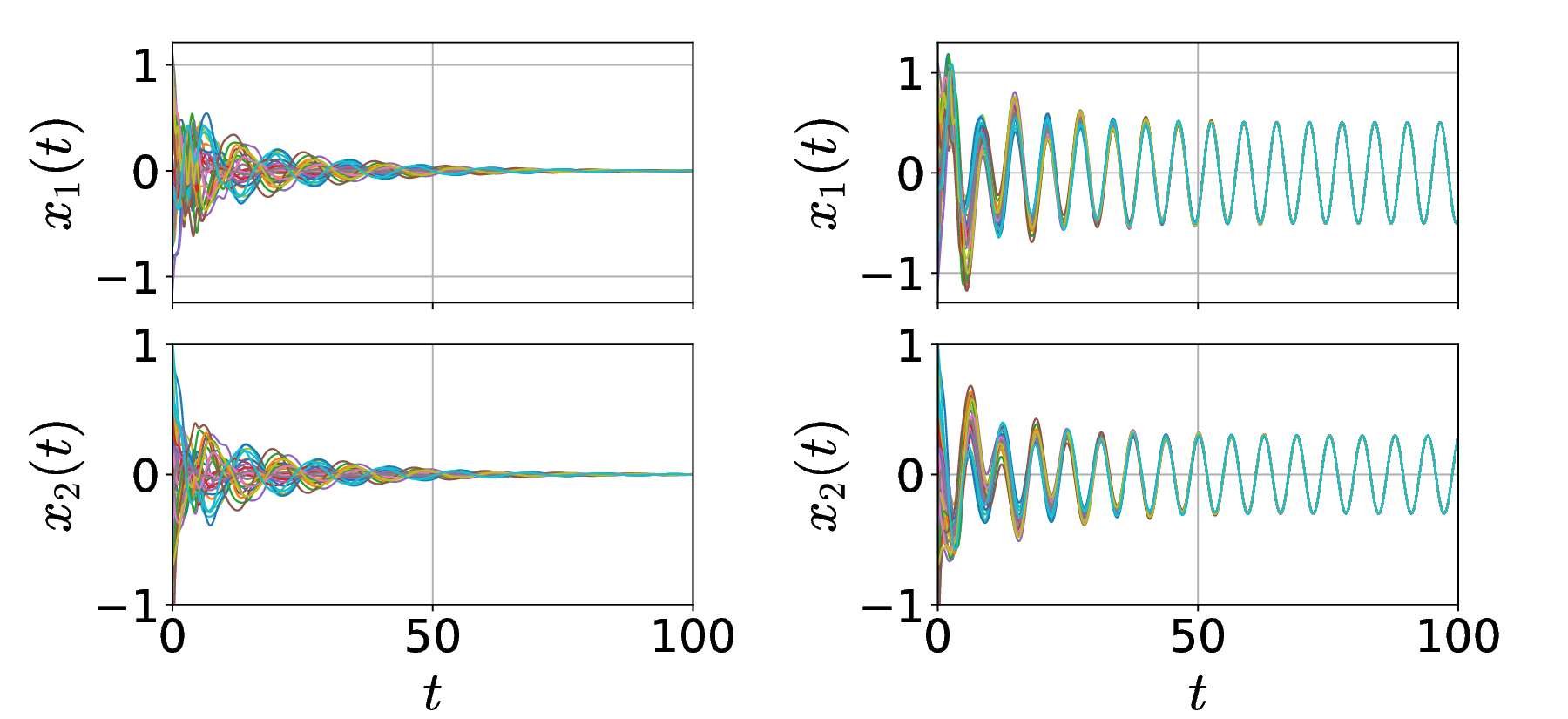}
    \caption{Closed-loop trajectories for $30$ randomly generated initial data: (left) $a=0$; (right) $a=1$.}
    \label{fig:ex}
\end{figure}




\section{Conclusion}\label{sec:Con}

In this paper, we have studied contraction analysis for nonlinear time-delay systems. We have first introduced the novel concepts of incremental and differential Lyapunov--Krasovskii functionals for time-delay systems. We have then established the equivalence between their existence, which guarantees uniform incremental exponential stability. We have next shown that the existence of a novel differential Lyapunov--Razumikhin also ensures this incremental stability. Finally, we have applied our results on Lyapunov--Krasovskii functionals to stabilizing control design, which is formulated in the LMI framework, by virtue of contraction theory. Future work includes studying the relationship between incremental and differential Lyapunov--Razumikhin functions. {\color{black}Another future direction is to extend our contraction framework to incremental input-to-state stability analysis, building on~\cite{Karafyllis:06,Pepe:07,IPZ:08,PKJ:17} for (non-incremental) stability analysis of time-delay control systems.}



\appendix

\renewcommand{\thesecthm}{\thesubsection.\arabic{secthm}}  

\subsection{Proof of Theorem~\ref{thm:iLK}}\label{app:iLK}
{\color{black}
Before showing the proof, we provide the following lemma about the path integral.
\begin{seclem}\label{lem:path}
    Let $\Gamma(\phi,\phi')$ be the collection of the class $C^1$ paths $\gamma: [0,1] \to C([-r,0],\Omega)$ connecting $\gamma(0)=\phi$ and $\gamma(1)=\phi'$, where we take the right and left derivatives at $0$ and $1$, respectively. Then, it follows that
    \begin{align}\label{eq:dist}
        \inf_{\gamma \in \Gamma(\phi,\phi')} \int_0^1 \left| \frac{d\gamma(s)}{ds}(0) \right|^p ds = |\phi(0) - \phi'(0)|^p
    \end{align}
    for any~$1 \le p \in \bR$.
\end{seclem}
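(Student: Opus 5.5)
We prove the two inequalities separately: a lower bound valid for every admissible path, and an explicit path that attains it.

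\textbf{Lower bound.} Fix an arbitrary $\gamma \in \Gamma(\phi,\phi')$. The evaluation map $\mathrm{ev}_0: C([-r,0],\bR^n) \to \bR^n$, $\psi \mapsto \psi(0)$, is bounded linear, with $|\psi(0)| \le \|\psi\|_C$. Hence $s \mapsto \gamma(s)(0)$ is a $C^1$ curve in $\bR^n$, and its derivative is
\begin{align*}
\frac{d}{ds}\bigl(\gamma(s)(0)\bigr) = \frac{d\gamma(s)}{ds}(0).
\end{align*}
This identity is the one point needing care: the difference quotient of $\gamma$ converges in $\|\cdot\|_C$, so applying the continuous map $\mathrm{ev}_0$ lets us pass to the limit pointwise at $\theta=0$. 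The one-sided derivatives at the endpoints are handled the same way. By the fundamental theorem of calculus,
\begin{align*}
\phi'(0)-\phi(0) = \int_0^1 \frac{d\gamma(s)}{ds}(0)\, ds .
\end{align*}
Taking norms and applying the triangle inequality for integrals gives
\begin{align*}
|\phi(0)-\phi'(0)| \le \int_0^1 \left|\frac{d\gamma(s)}{ds}(0)\right| ds .
\end{align*}
Since $p \ge 1$, $t\mapsto t^p$ is convex. Jensen's inequality (or H\"older's inequality) on the probability space $[0,1]$ then yields
\begin{align*}
\left(\int_0^1 \left|\frac{d\gamma(s)}{ds}(0)\right| ds\right)^p \le \int_0^1 \left|\frac{d\gamma(s)}{ds}(0)\right|^p ds .
\end{align*}
Combining the two displays and taking the infimum over $\Gamma(\phi,\phi')$ shows that the left-hand side of~\eqref{eq:dist} is at least $|\phi(0)-\phi'(0)|^p$.

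\textbf{Attainment.} Use the straight-line path $\gamma(s) = (1-s)\phi + s\phi'$.
\begin{itemize}
\item Since $\Omega$ is convex, $\gamma(s)(\theta) \in \Omega$ for every $\theta \in [-r,0]$ and $s \in [0,1]$, so $\gamma(s) \in C([-r,0],\Omega)$.
\item The path is affine in $s$, hence $C^1$, with constant derivative $\frac{d\gamma(s)}{ds} = \phi'-\phi$.
\end{itemize}
Therefore
\begin{align*}
\int_0^1 \left|\frac{d\gamma(s)}{ds}(0)\right|^p ds = |\phi'(0)-\phi(0)|^p ,
\end{align*}
so the infimum is attained and equals $|\phi(0)-\phi'(0)|^p$. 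This proves~\eqref{eq:dist}.

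No step is genuinely hard. The only subtleties are commuting evaluation at $\theta = 0$ with the $s$-derivative, which follows from boundedness of $\mathrm{ev}_0$, and using convexity of $\Omega$ so that the straight-line path stays in $C([-r,0],\Omega)$.
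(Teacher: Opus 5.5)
Your proof is correct and follows essentially the same route as the paper: the fundamental theorem of calculus applied to $s \mapsto \gamma(s)(0)$, then the triangle inequality and H\"older/Jensen for the lower bound, and the straight-line path for attainment. You also justify two points the paper leaves implicit: commuting evaluation at $\theta=0$ with the $s$-derivative, and using convexity of $\Omega$ to keep the segment admissible. Your parametrization $(1-s)\phi + s\phi'$ has the correct endpoints, whereas the paper's $s\phi+(1-s)\phi'$ is reversed; this is harmless, since only the norm of the difference enters.
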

\begin{proof}
        For any $\gamma \in \Gamma (\phi,\phi')$, the fundamental theorem of calculus gives
\begin{align*}
    \phi(0) - \phi'(0) = (\gamma(1))(0) - (\gamma(0))(0) = \int_0^1 \frac{d\gamma(s)}{ds} (0) ds,
\end{align*}
and consequently,
\begin{align*}
    |\phi(0) - \phi'(0)| = \left| \int_0^1 \frac{d\gamma(s)}{ds} (0) ds \right| \le \int_0^1 \left| \frac{d\gamma(s)}{ds} (0) \right| ds.
\end{align*}
From H\"older's inequality [28, Theorem~8.6], for any $p,q \in [1,\infty]$ such that $1/p + 1/q = 1$, we obtain
\begin{align*}
    \int_0^1 \left| \frac{d\gamma(s)}{ds} (0) \right| ds
    &\le \left( \int_0^1 \left| \frac{d\gamma(s)}{ds} (0) \right|^p ds \right)^{1/p} \left( \int_0^1 |1|^q ds \right)^{1/q} \nonumber\\
    &= \left( \int_0^1 \left| \frac{d\gamma(s)}{ds} (0) \right|^p ds \right)^{1/p},
\end{align*}
where for $p=\infty$, $(\int_0^1 |g(s)|^p ds)^{1/p}$ means $\sup_{s \in [0,1]}|g(s)|$.
Thus, we have
\begin{align}\label{pf1:R3}
    |\phi(0) - \phi'(0)|^p \le \int_0^1 \left| \frac{d\gamma(s)}{ds} (0) \right|^p ds.
\end{align}
Since this inequality holds for every $\gamma \in \Gamma(\phi,\phi')$, we obtain
\begin{align}\label{pf2:R3}
    |\phi(0) - \phi'(0)|^p \le \inf_{\gamma \in \Gamma(\phi,\phi')} \int_0^1 \left| \frac{d\gamma(s)}{ds} (0) \right|^p ds.
\end{align}

On the other hand, taking the line segment $\gamma(s) = s\phi + (1-s)\phi'$ yields
\begin{align*}
    \frac{d\gamma(s)}{ds}(0) = \phi(0) - \phi'(0),
\end{align*}
and thus
\begin{align*}
    \int_0^1 \left| \frac{d\gamma(s)}{ds} (0) \right|^p ds = \int_0^1 |\phi(0) - \phi'(0)|^p ds = |\phi(0) - \phi'(0)|^p.
\end{align*}
This implies
\begin{align}\label{pf3:R3}
    \inf_{\gamma \in \Gamma(\phi,\phi')} \int_0^1 \left| \frac{d\gamma(s)}{ds} (0) \right|^p ds \le |\phi(0) - \phi'(0)|^p.
\end{align}
Combining~\eqref{pf2:R3} and~\eqref{pf3:R3}, we have~\eqref{eq:dist}.
\end{proof}}

Now, we provide the proof of Theorem~\ref{thm:iLK}. 

(Differential $\Rightarrow$ Incremental) 
Let $\Gamma(\phi,\phi')$ be the collection of the paths $\gamma: [0,1] \to C([-r,0],\Omega)$ connecting $\gamma(0)=\phi$ and $\gamma(1)=\phi'$, where we take the right and left derivatives at $0$ and $1$, respectively. Our goal is to show that
    \begin{align}\label{eq:dLK2iLK}
        V_I(t,\phi,\phi') := \inf_{\gamma \in \Gamma(\phi,\phi')} \int_0^1 V_D \left( t, \gamma(s), \frac{d\gamma(s)}{ds} \right) ds
    \end{align}
    is an incremental Lyapunov--Krasovskii functional for the system~\eqref{eq:TDS}. We note that the integral in the right-hand side is well defined for any $t \in \bR$, $\gamma \in \Gamma(\phi,\phi')$, and $(\phi, \phi') \in C([-r,0],\Omega) \times C([-r,0],\Omega)$, since for each $t \in \bR$ and $(\phi, \phi') \in C([-r,0],\Omega) \times C([-r,0],\Omega)$, $V_D(t,\gamma(s),d\gamma(s)/ds)$ is a continuous function of $s \in [0,1]$.

    First, we show that \eqref{eq:dLK2iLK} satisfies \eqref{eq:iLK-a}. Substituting $(\phi, \delta \phi) = (\gamma(s), d\gamma(s)/ds)$ into~\eqref{cond:dLK-a} and integrating it over $s$ yields 
    \begin{align}\label{pf1:iLK}
        &k_1 \int_0^1 \left| \frac{d\gamma(s)}{ds}(0) \right|^p ds \nonumber\\
        &\le \int_0^1 V_D \left( t, \gamma(s), \frac{d\gamma(s)}{ds} \right) ds \le  k_2 \int_0^1 \left\| \frac{d\gamma(s)}{ds} \right\|_C^p ds.
    \end{align}
    Since taking the infimum preserves the inequalities, it follows from~\eqref{eq:dLK2iLK} that
    \begin{align*}
        &k_1 \inf_{\gamma \in \Gamma(\phi,\phi')} \int_0^1 \left| \frac{d\gamma(s)}{ds}(0) \right|^p ds\\
        &\le V_I(t,\phi,\phi') 
        \le  k_2 \inf_{\gamma \in \Gamma(\phi,\phi')} \int_0^1 \left\| \frac{d\gamma(s)}{ds} \right\|_C^p ds.
    \end{align*}
    The most right-hand side is upper bounded as
    \begin{align*}    
        &\inf_{\gamma \in \Gamma(\phi,\phi')} \int_0^1 \left\| \frac{d \gamma(s)}{ds} \right\|_C^p ds\\
        &\le \int_0^1 \left\| \frac{d (s \phi + (1-s)\phi') }{ds} \right\|_C^p ds
        = \|\phi - \phi' \|_C^p.
    \end{align*}
    {\color{black}From this and~\eqref{eq:dist}}, we obtain~\eqref{eq:iLK-a}.

    Next, we show that \eqref{eq:dLK2iLK} satisfies~\eqref{eq:iLK-b} also. For $t_0 \in \bR$ and $\gamma \in \Gamma(\phi,\phi')$ with $(\phi, \phi') \in C([-r,0],\Omega) \times C([-r,0],\Omega)$, consider the solution $\chi (t_0,\gamma (s))(t)$, $s \in [0,1]$ to a system~\eqref{eq:TDS} on $[t_0 -r, \infty)$, where recall that $\gamma (s) \in C([-r,0], \Omega)$ for each $s \in [0,1]$. Lemma~\ref{lem:sol} implies that
    \begin{align}\label{pf2:iLK}
    \delta x(t) = (D_\gamma \chi (t_0,\gamma (s))(t)) \delta \phi
    \end{align}
    is a solution to~\eqref{eq:vTDS} on $[t_0-r, \infty)$ for any $t_0 \in \bR$, $\gamma \in \Gamma(\phi,\phi')$, $(\phi, \phi') \in C([-r,0],\Omega) \times C([-r,0],\Omega)$, $s \in [0,1]$, and $\delta \phi \in C([-r,0],\bR^n)$. By selecting $\delta \phi = d \gamma (s)/ds$ and noting the chain rule 
    \begin{align}\label{pf2-5:iLK}
    D_s \chi (t_0,\gamma (s))(t) = (D_\gamma \chi (t_0,\gamma (s))(t)) \frac{d\gamma (s)}{ds},
    \end{align} 
    we conclude that
    \begin{align}\label{pf3:iLK}
    \delta x(t) = D_s \chi (t_0,\gamma (s))(t)
    \end{align}
    is a solution to~\eqref{eq:vTDS} on $[t_0-r, \infty)$ for any $t_0 \in \bR$, $\gamma \in \Gamma(\phi,\phi')$, $(\phi, \phi') \in C([-r,0],\Omega) \times C([-r,0],\Omega)$, and $s \in [0,1]$.
    
    Now, let $t \ge t_0$, and note that $\chi_t(t_0,\gamma(s)) \in \Gamma(\chi_t(t_0,\phi),\chi_t(t_0,\phi'))$ for any $\gamma \in \Gamma(\phi,\phi')$. Using the definition~\eqref{eq:dLK2iLK} of $V_I(t, \phi, \phi')$, we obtain the following
    \begin{align}\label{pf4:iLK}
        &V_I (t,\chi_t(t_0,\phi),\chi_t(t_0,\phi')) \nonumber\\
        &= \inf_{\bar \gamma \in \Gamma(\chi_t(t_0,\phi),\chi_t(t_0,\phi'))} \int_0^1 V_D \left(t, \bar \gamma(s), \frac{d\bar \gamma(s)}{ds}\right) ds \nonumber\\
        &\le \int_0^1 V_D (t, \chi_t(t_0,\gamma(s)), (D_s \chi (t_0,\gamma (s)))_t) ds
    \end{align}
    for any $t, t_0  \in \bR$ $(t \ge t_0)$, $\gamma \in \Gamma(\phi,\phi')$, and $(\phi, \phi') \in C([-r,0],\Omega) \times C([-r,0],\Omega)$.

    It follows from~\eqref{cond:dLK-b} and \eqref{pf3:iLK} that
    \begin{align*}
        &V_D(t,\chi_t(t_0,\gamma(s)), (D_s \chi (t_0,\gamma (s)))_t) \\
        &\qquad + k_3 \int_{t_0}^t |D_s \chi (t_0,\gamma (s))(\tau )|^p d\tau\\
        &\le V_D \left(t_0,\gamma(s), \frac{d\gamma (s)}{ds}\right).
    \end{align*}
    Combining this with \eqref{pf4:iLK} yields
    \begin{align}\label{pf5:iLK}
        &V_I (t,\chi_t(t_0,\phi),\chi_t(t_0,\phi')) \nonumber\\
        &\qquad + k_3 \int_0^1 \int_{t_0}^t |D_s \chi (t_0,\gamma (s))(\tau )|^p d\tau ds \nonumber\\
        &\le \int_0^1 V_D \left(t_0,\gamma(s), \frac{d\gamma (s)}{ds}\right) ds.
    \end{align}
    From~\eqref{pf1:iLK}, for each $t_0 \in \bR$, every $\gamma \in \Gamma(\phi,\phi')$, and $(\phi, \phi') \in C([-r,0],\Omega) \times C([-r,0],\Omega)$, the right-hand side is finite. From non-negativity of $V_I (t,\chi_t(t_0,\phi),\chi_t(t_0,\phi'))$, the second term of the left-hand side is finite for each $t, t_0 \in \bR$ $(t \ge t_0)$ and every $(\phi, \phi') \in C([-r,0],\Omega) \times C([-r,0],\Omega)$. Since any Riemann integral function on a bounded domain is Lebesgue integrable, and its Riemann and Lebesgue integrals are the same, e.g., \cite[Theorem 5.52]{WZ:15}, we can apply Fubini's theorem, e.g., \cite[Theorem 6.1]{WZ:15} to exchange the order of the integrations. Namely, we have
  \begin{align}\label{pf6:iLK}
        &\int_0^1 \int_{t_0}^t |D_s \chi (t_0,\gamma (s))(\tau )|^p d\tau ds \nonumber\\
        &= \int_{t_0}^t \int_0^1 |D_s \chi (t_0,\gamma (s))(\tau )|^p ds  d\tau
    \end{align}
    for each $t, t_0 \in \bR$ $(t \ge t_0)$,  every $\gamma \in \Gamma(\phi,\phi')$, and $(\phi, \phi') \in C([-r,0],\Omega) \times C([-r,0],\Omega)$. Next, applying~\eqref{pf1:R3} to~$D_s \chi (t_0,\gamma (s))(\tau )$ leads to
    \begin{align}\label{pf7:iLK}
    | \chi (t_0,\phi) (\tau ) - \chi (t_0,\phi') (\tau ) |^p
    \le \int_0^1 |D_s \chi (t_0,\gamma (s))(\tau )|^p ds.
    \end{align}
    This, \eqref{pf5:iLK}, and \eqref{pf6:iLK} yield
      \begin{align*}
        &V_I (t,\chi_t(t_0,\phi),\chi_t(t_0,\phi')) \\
        &\qquad + k_3 \int_{t_0}^t | \chi (t_0,\phi) (\tau ) - \chi (t_0,\phi') (\tau ) |^p  d\tau \\
        &\le \int_0^1 V_D \left(t_0,\gamma(s), \frac{d\gamma (s)}{ds}\right) ds.
    \end{align*}
    Since this holds for all $t, t_0 \in \bR$ $(t \ge t_0)$, $\gamma \in \Gamma(\phi,\phi')$, and $(\phi, \phi') \in C([-r,0],\Omega) \times C([-r,0],\Omega)$, taking the infimum with respect to $\gamma$ and using~\eqref{eq:dLK2iLK} lead to~\eqref{eq:iLK-b}.


    (Incremental $\Rightarrow$ Differential)
    Let $\Gamma'(\phi,\delta \phi)$ be the collection of the class $C^1$ paths $\gamma:[0,1] \to C([-r,0],\Omega)$ such that $\gamma(0)=\phi$ and $d\gamma(0)/ds=\delta \phi \in C([-r,0],\bR^n)$, where $d\gamma(0)/ds$ means the right derivative at $0$. We aim at showing that
    \begin{align}\label{eq:inc2fin}
        V_D(t,\phi,\delta \phi) := \inf_{\gamma \in \Gamma'(\phi,\delta \phi)} \limsup_{s \to 0+} \frac{V_I(t,\gamma(0),\gamma(s))}{s^p}
    \end{align}
    is a differential Lyapunov--Krasovskii functional.

    First, we show that \eqref{eq:inc2fin} satisfies~\eqref{cond:dLK-a}. Since taking the limit superior preserves the inequalities, it follows from~\eqref{eq:iLK-a} with $(\phi,\phi')=(\gamma(0), \gamma(s))$ that
    \begin{align*}
        &k_1 \limsup_{s \to 0+} \frac{|(\gamma(s))(0) - (\gamma(0))(0)|^p}{s^p}\\
        &\le \limsup_{s \to 0+} \frac{V_I(t,\gamma(0),\gamma(s))}{s^p} \le k_2 \limsup_{s \to 0+} \frac{\|\gamma(s) - \gamma(0)\|_C^p}{s^p}.
    \end{align*}
    Noting that taking the infimum also preserves the inequalities yields
    \begin{align*}
        &k_1 \inf_{\gamma \in \Gamma'(\phi,\delta \phi)} \limsup_{s \to 0+} \frac{|(\gamma(s))(0) - (\gamma(0))(0)|^p}{s^p}\\
        &\le V_D(t,\phi,\delta \phi) \le k_2 \inf_{\gamma \in \Gamma'(\phi,\delta \phi)} \limsup_{s \to 0+} \frac{\|\gamma(s) - \gamma(0)\|_C^p}{s^p},
    \end{align*}
    where~\eqref{eq:inc2fin} is used. For any $\gamma \in \Gamma'(\phi,\delta \phi)$, we have
    \begin{align*}
        &\lim_{s \to 0+} \frac{|(\gamma(s))(0) - (\gamma(0))(0)|^p}{s^p}\\
        &= \left( \left| \lim_{s \to 0+} \frac{(\gamma(s))(0) - (\gamma(0))(0)}{s}\right| \right)^p = |\delta \phi (0)|^p
    \end{align*}
    and
    \begin{align*}
        \lim_{s \to 0+} \frac{\|\gamma(s) - \gamma(0)\|_C^p}{s^p} 
        = \left(\left\| \lim_{s \to 0+} \frac{\gamma(s) - \gamma(0)}{s}\right\|_C \right)^p = \|\delta \phi\|_C^p.
    \end{align*}
    Hence,~\eqref{cond:dLK-a} holds for $V_D(t, \phi, \delta \phi)$ in~\eqref{eq:inc2fin}.

    Next, we consider showing~\eqref{cond:dLK-b}. From~\eqref{pf2:iLK}, $\delta x(t) = (D_\gamma \chi (t_0,\gamma (s))(t)) \delta \phi$ is a solution to~\eqref{eq:vTDS} on $[t_0-r, \infty)$ for any $t_0 \in \bR$, $\gamma \in \Gamma'(\phi,\delta \phi)$, $(\phi, \delta \phi) \in C([-r,0],\Omega) \times C([-r,0],\bR^n)$, and $s \in [0,1]$. In particular for $s \to 0^+$, this means
    \begin{align}
        (\chi_t(t_0,\phi), (D_\phi \chi(t_0,\phi))_t \delta \phi) \in \Gamma'(x_t,\delta x_t).
    \end{align}
    As a result, \eqref{eq:inc2fin} yields
    \begin{align*}
        &V_D(t,\chi_t(t_0,\phi), (D_\phi \chi(t_0,\phi))_t \delta \phi)\\
        &= \inf_{\bar \gamma \in \Gamma'(\chi_t(t_0,\phi), (D_\phi \chi(t_0,\phi))_t \delta \phi)} \limsup_{s \to 0^+} \frac{V_I(t,\bar \gamma(0),\bar \gamma(s))}{s^p}\\
        &\le \limsup_{s \to 0^+} \frac{V_I(t,\chi_t(t_0,\gamma(0)),\chi_t(t_0,\gamma(s)))}{s^p}
    \end{align*}
    for any $t, t_0 \in \bR$ $(t \ge t_0)$, $\gamma \in \Gamma'(\phi,\delta \phi)$, and $(\phi, \delta \phi) \in C([-r,0],\Omega) \times C([-r,0],\bR^n)$. It follows from~\eqref{eq:iLK-b} that
    \begin{align*}
        &V_D(t,\chi_t(t_0,\phi), (D_\phi \chi(t_0,\phi))_t \delta \phi)\\
        &\le \limsup_{s \to 0^+} \left(\frac{V_I(t_0,\gamma(0),\gamma(s))}{s^p}\right.\\
        &\qquad \qquad \left. - k_3 \int_{t_0}^t \frac{|\chi (t_0,\gamma(s))(\tau) - \chi (t_0,\gamma(0))(\tau)|^p}{s^p} d\tau\right)\\
        &\le \limsup_{s \to 0^+} \frac{V_I(t_0,\gamma(0),\gamma(s))}{s^p}\\
        &\qquad - k_3 \liminf_{s \to 0^+} \int_{t_0}^t \frac{|\chi (t_0,\gamma(s))(\tau) - \chi (t_0,\gamma(0))(\tau)|^p}{s^p} d\tau.
    \end{align*}
    We consider the corresponding Lebesgue integral to the second term of the most right-hand side. By Fatou lemma~\cite[Theorem 5.34]{WZ:15}, we have
    \begin{align*}
        &\liminf_{s \to 0^+} \int_{t_0}^t \frac{|\chi (t_0,\gamma(s))(\tau) - \chi (t_0,\gamma(0))(\tau)|^p}{s^p} d\tau\\
        &\ge \int_{t_0}^t \liminf_{s \to 0^+} \frac{|\chi (t_0,\gamma(s))(\tau) - \chi (t_0,\gamma(0))(\tau)|^p}{s^p} d\tau\\
        &= \int_{t_0}^t \left| \lim_{s \to 0^+} \frac{\chi (t_0,\gamma(s))(\tau) - \chi (t_0,\gamma(0))(\tau)}{s}\right|^p d\tau\\
        &= \int_{t_0}^t | D_s \chi (t_0,\gamma(0))(\tau) |^p d\tau = \int_{t_0}^t | D_\phi \chi (t_0,\phi)(\tau) \delta \phi |^p d\tau,
    \end{align*}
    where~\eqref{pf2-5:iLK} is used in the last equality.
    This is equivalent to the Riemann integral because this is non-negative and upper bounded. Therefore, we obtain
        \begin{align*}
        &V_D(t,\chi_t(t_0,\phi), (D_\phi \chi(t_0,\phi))_t \delta \phi) \\
        & + k_3 \int_{t_0}^t | D_\phi \chi (t_0,\phi)(\tau) \delta \phi |^p d\tau
        \le \limsup_{s \to 0^+} \frac{V_I(t_0,\gamma(0),\gamma(s))}{s^p}.
    \end{align*}
    Since this holds for all $t, t_0 \in \bR$ $(t \ge t_0)$, $\gamma \in \Gamma'(\phi, \delta \phi)$, and $(\phi, \delta \phi) \in C([-r,0],\Omega) \times C([-r,0],\bR^n)$, taking the infimum with respect to $\gamma$ and using~\eqref{eq:inc2fin} lead to~\eqref{cond:dLK-b}.
\QED


\subsection{Proof of Theorem~\ref{thm:dLR}}\label{app:dLR}
    {\color{black}Applying~\cite[Theorem 2]{EA:20}, one can show that there exist~$0 < \lambda \in \bR$ and~$1 \le k \in \bR$ satisfying
    \begin{align}\label{pf1:dLR_ex}
     |(D_\phi \chi (t_0,\phi)(t)) \delta \phi| 
    \le k e^{-\lambda (t-t_0)} \|\delta \phi\|_C, 
    \; \forall t \ge t_0
    \end{align}
    for any $t_0 \in \bR$ and $(\phi, \delta \phi) \in C([-r,0],\Omega) \times C([-r,0],\bR^n)$.}
    For each $(\phi, \phi') \in C([-r,0],\Omega) \times C([-r,0],\Omega)$, consider the path $\gamma (s) = s\phi + (1-s)\phi'$, $s \in [0, 1]$. As shown in the proof of Theorem~\ref{thm:iLK}, $\delta x(t) = D_s \chi (t_0,\gamma (s))(t)$ is a solution to the variational system~\eqref{eq:vTDS} on $[t_0-r, \infty)$ for any $t_0 \in \bR$, $(\phi, \phi') \in C([-r,0],\Omega) \times C([-r,0],\Omega)$, and $s \in [0,1]$. Thus, substituting $\phi = \gamma (s)$, $s \in [0,1]$, and $\delta \phi = d\gamma (s)/ds = \phi - \phi'$ into \eqref{pf1:dLR_ex} yields
    {\color{black}
    \begin{align*}
    |(D_s \chi (t_0,\gamma (s)))(t)| 
    \le k e^{-\lambda (t-t_0)} \| \phi - \phi' \|_C, 
    \; \forall t \ge t_0
    \end{align*}
     for any $t_0 \in \bR$, $(\phi, \phi') \in C([-r,0],\Omega) \times C([-r,0],\Omega)$, and $s \in [0, 1]$. Integrating both sides with respect to~$s$ over~$[0, 1]$ and utilizing~\eqref{eq:dist} lead to uniform incremental exponential stability.}
\QED





\section*{References}\vspace*{-4mm}
\bibliographystyle{IEEEtran}
\bibliography{ref}

@article{Han:91,
  title = {Retarded {{Dynamical Systems}}: {{Stability}} and {{Characteristic Functions}} ({{G}}. {{St\'ep\'an}})},
  shorttitle = {Retarded {{Dynamical Systems}}},
  author = {Hannsgen, Kenneth B.},
  year = 1991,
  journal = {SIAM Review},
  volume = {33},
  number = {1},
  pages = {147--147},
  publisher = {{Society for Industrial and Applied Mathematics}}
}

@article{LS:98,
	Author = {W. Lohmiller and J.-J. E. Slotine},
	Journal = {Automatica},
	Number = {6},
	Pages = {683--696},
	Title = {On contraction analysis for non-linear systems},
	Volume = {34},
	Year = {1998}
}

@article{Angeli:02,
	Author = {D. Angeli},
	Journal = {IEEE Trans. Automat. Control},
	Number = {3},
	Pages = {410--421},
	Publisher = {IEEE},
	Title = {A {L}yapunov approach to incremental stability properties},
	Volume = {47},
	Year = {2002}
}

@article{Mon:03,
  title = {Oscillatory {{Expression}} of {{Hes1}}, P53, and {{NF-$\kappa$B Driven}} by {{Transcriptional Time Delays}}},
  author = {Monk, Nicholas A. M.},
  year = 2003,
  journal = {Current Biology},
  volume = {13},
  number = {16},
  pages = {1409--1413}
}

@article{WS:06,
  title={Contraction analysis of time-delayed communications and group cooperation},
  author={Wang, Wei and Slotine, J.-J. E.},
  journal={IEEE Trans. Automat. Control},
  volume={51},
  number={4},
  pages={712--717},
  year={2006},
  publisher={IEEE}
}

@article{Karafyllis:06,
  title = {Lyapunov Theorems for Systems Described by Retarded Functional Differential Equations},
  author = {Karafyllis, Iasson},
  year = 2006,
  journal = {Nonlinear Anal. Theory Methods Appl.},
  volume = {64},
  number = {3},
  pages = {590--617}
}

@article{Pepe:07,
  title={The problem of the absolute continuity for {L}yapunov--{K}rasovskii functionals},
  author={Pepe, Pierdomenico},
  journal={IEEE Trans. Automat. Control},
  volume={52},
  number={5},
  pages={953--957},
  year={2007},
  publisher={IEEE}
}

@article{IPZ:08,
  title={Global output stability for systems described by retarded functional differential equations: {L}yapunov characterizations},
  author={Karafyllis, Iasson and Pepe, Pierdomenico and Jiang, Zhong-Ping},
  journal={Eur. J. Control},
  volume={14},
  number={6},
  pages={516--536},
  year={2008},
  publisher={Elsevier}
}

@book{Gan:09,
  title = {Economic {{Dynamics}}},
  author = {Gandolfo, Giancarlo},
  year = 2009,
  edition = {4th ed.},
  publisher = {Springer}
}

@article{PPD:10,
  title={Symbolic models for nonlinear time-delay systems using approximate bisimulations},
  author={Pola, Giordano and Pepe, Pierdomenico and Di Benedetto, Maria D and Tabuada, Paulo},
  journal={Syst. Control Lett.},
  volume={59},
  number={6},
  pages={365--373},
  year={2010},
  publisher={Elsevier}
}

@article{CPR:13,
  title={A {R}azumikhin approach for the incremental stability of delayed nonlinear systems},
  author={A. Chaillet and A. Y. Pogromsky and B. R{\"u}ffer},
  journal={Proc. 52nd IEEE Conf. Decis. Control},
  pages={1596--1601},
  year={2013},
  organization={IEEE}
}

@book{HL:13,
  title = {Introduction to Functional Differential Equations},
  author = {J. K. Hale and S. M. V. Lunel},
  year = {1993},
  publisher = {Springer Science+Businesss Media},
  address = {New York}
}

@article{PK:13,
  title={Converse {L}yapunov--{K}rasovskii theorems for systems described by neutral functional differential equations in {H}ale's form},
  author={Pepe, Pierdomenico and Karafyllis, Iasson},
  journal={Int. J. Control},
  volume={86},
  number={2},
  pages={232--243},
  year={2013},
  publisher={Taylor \& Francis}
}

@article{FS:14,
  author={F. Forni and R. Sepulchre},
  journal={IEEE Trans. Automat. Control}, 
  title={A Differential {L}yapunov Framework for Contraction Analysis}, 
  year={2014},
  volume={59},
  number={3},
  pages={614-628}
}

@book{Fridman:14b,
  title = {Introduction to {{Time-Delay Systems}}: {{Analysis}} and {{Control}}},
  shorttitle = {Introduction to {{Time-Delay Systems}}},
  author = {Fridman, Emilia},
  year = {2014},
  publisher = {Birkhaeuser}
}

@book{WZ:15,
  title={Measure and Integral: An Introduction to Real Analysis},
  author={R. L. Wheeden and A. Zygmund},
  year={2015},
  edition={2},
  publisher={CRC Press},
  address = {Boca Raton}
}

@ARTICLE{MS:17,
  author={Manchester, Ian R. and Slotine, Jean-Jacques E.},
  journal={IEEE Trans. Automat. Control}, 
  title={Control Contraction Metrics: Convex and Intrinsic Criteria for Nonlinear Feedback Design}, 
  year={2017},
  volume={62},
  number={6},
  pages={3046-3053},
  }

@article{PKJ:17,
  title={Lyapunov--Krasovskii characterization of the input-to-state stability for neutral systems in {H}ale’s form},
  author={Pepe, Pierdomenico and Karafyllis, Iasson and Jiang, Zhong-Ping},
  journal={Syst. Control Lett.},
  volume={102},
  pages={48--56},
  year={2017},
  publisher={Elsevier}
}

@article{NT:18,
  title={On contraction of functional differential equations},
  author={Ngoc, Pham Huu Anh and Trinh, Hieu},
  journal={SIAM J. Control Optim.},
  volume={56},
  number={3},
  pages={2377--2397},
  year={2018},
  publisher={SIAM}
}

@article{LSF:19,
  title = {Survey on Time-Delay Approach to Networked Control},
  author = {Liu, Kun and Selivanov, Anton and Fridman, Emilia},
  year = 2019,
  journal = {Annu. Rev. Control},
  volume = {48},
  pages = {57--79}
}

@article{EA:20,
  title={On estimation of rates of convergence in {L}yapunov--{R}azumikhin approach},
  author={Efimov, Denis and Aleksandrov, Alexander},
  journal={Automatica},
  volume={116},
  pages={108928},
  year={2020},
  publisher={Elsevier}
}

@article{LR:20,
  title = {Stability Analysis by Contraction Principle for Impulsive Systems with Infinite Delays},
  author = {Liu, Xinzhi and Ramirez, Cesar},
  year = 2020,
  journal = {Commun. Nonlinear Sci. Numer. Simul.},
  volume = {82},
  pages = {105021}
}

@book{Bullo:22,
  author    = {F. Bullo},
  publisher = {Kindle Direct Publishing},
  title     = {Contraction Theory for Dynamical Systems},
  year      = {2022},
  edition   = {1.0},
}

@ARTICLE{GAA:22,
  author={Giaccagli, Mattia and Astolfi, Daniele and Andrieu, Vincent and Marconi, Lorenzo},
  journal={IEEE Trans. Automat. Control}, 
  title={Sufficient Conditions for Global Integral Action via Incremental Forwarding for Input-Affine Nonlinear Systems}, 
  year={2022},
  volume={67},
  number={12},
  pages={6537-6551},
}

@article{KB:24,
  title={Incremental versus differential approaches to exponential stability and passivity},
  author={Kawano, Yu and Besselink, Bart},
  journal={IEEE Trans. Automat. Control},
  year={2024},
  volume={69},
  number={9},
  pages={6450-6457}
}

@article{DCG:25,
  title={Time-varying convex optimization: A contraction and equilibrium tracking approach},
  author={Davydov, Alexander and Centorrino, Veronica and Gokhale, Anand and Russo, Giovanni and Bullo, Francesco},
  journal={IEEE Trans. Automat. Control},
  year={2025},
  publisher={IEEE},
  doi={10.1109/TAC.2025.3576043}}

@ARTICLE{KH:25,
  author={Y. Kawano and Y. Hosoe},
  journal={IEEE Control Syst. Lett.}, 
  title={Contraction Analysis of Almost Surely Monotone Discrete-Time Systems With Unknown Time Delay}, 
  year={2025},
  volume={9},
  number={},
  pages={426-431},
}

@article{KS:25,
  title={Virtual Contraction Approach to Decentralized Adaptive Stabilization of Nonlinear Time-Delayed Networks},
  author={Kawano, Yu and Sun, Zhiyong},
  journal={IEEE Trans. Automat. Control},
  year={2026},
  note = {(early access)}
}

@ARTICLE{KSS:25,
  author={Kawano, Yu and van der Schaft, A. J. and Scherpen, J. M. A.},
  journal={IEEE Trans. Automat. Control}, 
  title={Youla–{K}u\~cera Parameterization in Contraction Framework}, 
  year={2025},
  volume={70},
  number={3},
  pages={1667-1682},
  }


\end{document}